\newtheorem{definition}{Definition}
\newtheorem{theorem}{Theorem}
\newtheorem{proposition}{Proposition}
\newtheorem{lem}{Lemma}
\newtheorem{cor}{Corollary}
\newtheorem{remark}{Remark}
\newtheorem{example}{Example}
\newcommand{\C}{\mbox{$\cal C$}}
\newcommand{\ben}{\begin{equation*}}
\newcommand{\een}{\end{equation*}}
\newcommand{\comment}[1]{}
\newcommand{\F}{\mathbb{F}}
\newcommand{\A}{\mbox{$\mathcal A$}}
\newcommand{\x}{\mbox{\bf x}}
\newcommand{\y}{\mbox{\bf y}}
\newcommand{\z}{\mbox{\bf z}} 
\begin{document}

 \author{ Jon-Lark Kim\thanks{jlkim@sogang.ac.kr,
Jon-Lark Kim is with Department of Mathematics, Sogang University, Seoul, South Korea.}}

\title{Fuzzy linear codes based on nested linear codes (Revised on Sep. 8, 2024)
}


\date{}

\maketitle

\begin{abstract}
In this paper, we describe a correspondence between a fuzzy linear code and a family of nested linear codes. We also describe the arithmetic of fuzzy linear codes.
As a special class of nested linear codes, we consider a family of nested self-orthogonal codes.
  A linear code is self-orthogonal if it is contained in its dual and self-dual if it is equal to its dual. We introduce a definition of fuzzy self-dual or self-orthogonal codes which include classical self-dual or self-orthogonal codes.
   As examples, we construct several interesting classes of fuzzy linear codes including fuzzy Hamming codes, fuzzy Golay codes, and fuzzy Reed-Muller codes. We also give a general decoding algorithm for fuzzy linear codes.

\end{abstract}

\noindent
{\bf keywords} error-correcting code, fuzzy code, fuzzy set\\

{\bf MSC(2010):} 03E72, 94B05

\section{Introduction}

{\color{blue}{(I have received some comments on the journal version of the paper. So in this revision, I corrected some errors in blue color. I appreciate all for those comments.)}}

For the past half century, coding theory which originated from Shannon's seminal paper~\cite{Sha} has grown
into a discipline intersecting mathematics and engineering with applications to almost every
area of communication such as satellite and cellular telephone transmission, compact disc
recording, and data storage~\cite{HuffmanPless}.
Shannon's results guarantee that the data can be encoded before transmission so that the altered data can be decoded to the specified degree of accuracy~\cite{HuffmanPless}.

In coding theory, the encoding process assigns each vector of length $k$, called information, to a unique vector of length $n$ ($k \le n$), called a codeword of length $n$ by adding the redundancy. In terms of fuzzy set theory, each codeword has membership value 1 and non-codewords have membership value 0. Therefore, any $k$-dimensional subspace $\mathcal C$, called a linear code, of a given $n$-dimensional space $\mathbb F_q^n$ over a finite field $\mathbb F_q$ can be regarded as a crisp set. The main assumption in coding theory is that the vectors of length $k$ as well as their corresponding codewords appear equally likely in the communication channel. However, in a more general situation, it is natural to think that vectors of a certain length are encoded into fuzzy codewords with different membership values. For example, suppose that we have four information $0, 1,2,3$ in $\mathbb Z_4$ and encode them into $(00), (01), (11), (10)$ in $\mathbb F_2^2$ respectively, called the Gray map, with the corresponding membership values $1, 1/2, 1/4, 1/4$ in this order. This defines a fuzzy set $A$ in $\mathbb F_2^2$. It is interesting to see that the three upper $\alpha$-level cuts $A_1=\{ (00) \}, A_{1/2}=\{ (00), (01) \}, A_{1/4}=\mathbb F_2^2$ are all linear codes. Therefore, fuzzy set theory provides more general concepts and tools in the study of classical coding theory. This is one of the roles of fuzzy set theory in this work.

Hence, an interesting and challenging problem is to send fuzzy codewords over a noisy channel and decode/correct them if there are errors in the received vectors. In order to do this, it is natural to apply classical coding theory to fuzzy codewords. First of all, we need to construct interesting fuzzy codewords (construction problem), and then we need to decode altered fuzzy codewords efficiently (decoding problem).
Fuzzy linear codes, that is, fuzzy linear spaces are a reasonable choice since they can be simply described using the concept of generator matrices. This motivates the study of fuzzy linear codes.

More precisely, a fuzzy linear code naturally defines a family of nested linear codes using the concept of
the upper $\alpha$-level cuts. Conversely, any family of nested linear codes defines a fuzzy linear code whose upper $\alpha$-cuts correspond to nested linear codes. Moreover, a given linear code can produce a family of nested linear codes by considering a sequence of its nested subcodes, itself, and its nested supercodes. Therefore, a fuzzy linear code is a parameterized way to describe a linear code.
This is one of the advantages to study fuzzy linear codes because this approach gives a new merit to understand a single linear code in terms of its nested subcodes and nested supercodes. For example, the well known family of Reed-Muller codes in Section~\ref{sec-RM} is in fact a fuzzy linear code, which was not noticed in any literature to the best of our knowledge.

The first concept of a fuzzy linear space was introduced by Katsaras and Liu ~\cite{KatLiu} in 1977, followed by
 Nanda~\cite{Nan} and Biswas~\cite{Bis}. Shum and De Gang~\cite{ShuGan} defined a fuzzy linear code using the definition of a fuzzy linear space and studied the fuzzy cyclic codes over a finite field $\mathbb F_q$ using the fuzzy ideal of a certain group algebra. Their approach~\cite{ShuGan} mainly uses the concept of classical cyclic codes which are defined as an ideal of a certain quotient ring.

On the other hand, codes over finite rings including finite fields are related to module theory.
 The concept of fuzzy modules was introduced by Negoita and Ralescu~\cite{NegRal} while the notion of fuzzy submodule was introduced by
Maschinchi and Zahedi~\cite{MasZah}. Using this fuzzy module theory, Atamewoue et al.~\cite{Ata} studied fuzzy linear codes and fuzzy cyclic codes over $\mathbb Z_{p^k}$.

Nevertheless, more theories of fuzzy linear codes are still under development. In particular, we propose two main problems. The first problem is about the construction of a family of fuzzy linear codes other than the family of cyclic codes which were studied by~\cite{ShuGan} and~\cite{Ata}. The second problem is about an efficient decoding algorithm for fuzzy linear codes because the previously known papers~\cite{ShuGan},~\cite{Ata} describe only the basic ideas of decoding of fuzzy linear codes with no special advantage over a classical decoding.

To solve the first problem, we consider self-orthogonal/self-dual codes.
We note that
a given linear code $\C$ can define a fuzzy linear code which gives a family of nested linear codes $\C_0 \subset \C_1 \subset \cdots \subset \C_i =\C  \subset  \cdots \subset \C_{m-1}$ for some $i$ and $m$. One problem of this association is that there are many non-canonical choices of supercodes $\C_{j'}$ of $\C$ ($j' > i$) of $\C$. We can solve this problem if we start from a self-orthogonal code because $\C_j$ determines $\C_j'$ by $\C_j'=\C_j^{\perp}$. Hence, fuzzy self-orthogonal codes are simpler to describe than general fuzzy linear codes. This motives why fuzzy self-orthogonal or self-dual fuzzy codes are interesting.

We recall that a linear code is called self-orthogonal if it is contained in its dual under the Euclidean inner product and self-dual if it is equal to its dual.
It is well known that self-orthogonal/self-dual codes have many mathematical properties and have wide applications to lattices, designs, and quantum codes~\cite{RaiSlo}. This paper gives another application of self-orthogonal/self-dual codes to fuzzy set theory.

In this paper, we show that fuzzy self-orthogonal or  self-dual codes include several interesting classes of fuzzy linear codes, called fuzzy Hamming codes, fuzzy Golay codes, and fuzzy Reed-Muller codes.
We also give a general decoding algorithm for fuzzy linear codes which turns out to be efficient because it reduces the size of the syndrome table.

\medskip
This paper is organized as follows. Section 2 gives facts from coding theory and fuzzy linear codes. Section 3 describes the arithmetic of fuzzy linear codes. In Section 4, we give the definition of fuzzy self-dual codes and examples associated with the famous binary Hamming $[8,4,4]$ code and the binary Golay $[24,12,8]$ code. In Section 5, fuzzy self-orthogonal codes are introduced as a generalization of fuzzy self-dual codes. Section 6 shows that the binary Reed-Muller codes can be associated with fuzzy Reed-Muller codes, which are also self-orthogonal or self-dual. Section 7 describes the Syndrome Decoding Algorithm for fuzzy linear codes. Section 8 concludes the paper.

\section{Preliminaries}




In this section, we review basic facts from coding theory and fuzzy set theory. For coding theory, we refer to~\cite{CarMarManPic},~\cite{GurRudSud},~\cite{HuffmanPless},~\cite{MS}. We also refer to \cite{SyrGra} for basic fuzzy set theory and to ~\cite{Hoh} and~\cite{MaYan} for a recent theory and an application, respectively.

Let $\mathbb F_q$ be a finite field with $q$ elements, where $q=p^r$ for some prime $p$ and a positive integer $r \ge 1$. Let
$$\F_q^n = \{\x=(x_1, x_2, \dots, x_n) ~|~ x_i \in \F_q {\mbox{ for all }} i \}.$$
 Then $\F_q^n$ is an $n$-dimensional vector space over $\F_q$.
We recall that a subset $S$ of $\F_q^n$ is called a {\em linear space} or a {\em subspace} of $\F_q^n$ if $\x + \y  \in S$ for any $\x, \y \in S$, and $\alpha \x \in S$  for any $\x \in S$ and $\alpha \in \F_q$.

\begin{definition}{\em
A {\em linear $[n, k]$ code} (shortly {\em $[n, k]$ code}) $\mathcal C$ of length $n$ with dimension $k$ over $\F_q$ is a $k$-dimensional subspace of $\F_q^n$.
 If $q=2$, $\mathcal C$ is called a {\em binary} code.
}
\end{definition}

\begin{definition}{\em
A {\em generator matrix} for an $[n, k]$ code $\C$ over $\F_q$ is a
$k \times n$ matrix $G$ whose rows form a basis for $\C$.
A generator matrix of the form $[I_k | A]$ where $I_k$ is the $k \times k$ identity matrix is called in {\em standard form}.
}
\end{definition}

\begin{definition}{\em
Given an $[n, k]$ code $\C$ over $\F_q$, there is an $(n-k)\times n$ matrix $H$, called a {\em parity check matrix} for $\C$, defined
by
\[
\C = \{ \x \in  \F_q^n ~|~ H{\x}^T = {\bf 0} \}.
\]
}
\end{definition}

It is a well-known fact~\cite{HuffmanPless} that if $G = [I_k ~|~ R]$ is a generator matrix in standard form for the $[n, k]$ code $\C$ over $\F_q$,
then $H = [-R^T | I_{n-k}]$ is a parity check matrix for $\C$.

\begin{definition}{\em
The {\em syndrome} of a vector ${\bf x}$ in $\F_q^n
$ with respect to
the parity check matrix $H$ is the vector in $\F_q^{n-k}$
defined by

\[{\mbox{syn}}({\bf x}) = H{\bf x}^T.
\]
}
\end{definition}

\begin{example}
{\em
The matrix $G = [I_4 | R]$, where
\[
G =\left[\begin{array}{cccc|ccc}
1 & 0 & 0&0&0&1&1\\
0&1&0&0 &1&0&1\\
0&0&1&0& 1&1&0\\
0&0&0&1& 1&1&1\\
\end{array}
\right],
\]
is a generator matrix in standard form for a binary $[7,4]$ code $\mathcal H_3$, called the binary $[7, 4]$ Hamming code. By the above fact,
a parity check matrix for $\mathcal H_3$ is
\[
H = [R^T ~| ~I_3] =
\left[\begin{array}{cccc|ccc}
0&1&1&1& 1&0&0\\
1&0&1&1 &0&1&0 \\
1&1&0&1 &0&0&1\\
\end{array}
\right].
\]
In fact, $H$ is equivalent under elementary row operations to the below parity check matrix:
\begin{equation}\label{eq-Hamming-par}
H \sim
H_3= \left[\begin{array}{ccccccc}
0&0&0&1& 1&1&1\\
0&1&1&0 &0&1&1 \\
1&0&1&0 &1&0&1\\
\end{array}
\right].
\end{equation}
The $i$th column of $H_3$ corresponds to a binary representation of $i$ for $1 \le i \le 7$.

We describe briefly the Syndrome Decoding for $\mathcal H_3$ (see~\cite[p. 43]{HuffmanPless}).
Let ${\bf x}=(1 0 0 0 0 1 1)$ be a codeword and ${\bf y}=(1 1 0 0 0 1 1)$ be a received vector. Then $H_3{\bf x}^T=(0 0 0 0)^T$ and $H_3{\bf y}^T=(0 1 0 0)^T$. Letting ${\bf e}=(0 1 0 0 0 0 0)$, we have $H_3{\bf e}^T=(0 1 0 0)^T$. Hence, $H_3{\bf y}^T= H_3{\bf e}^T$, that is, $H_3 ({\bf y} - {\bf e})^T=(0 0 0 0)^T$. Thus, ${\bf y}$ is decoded to ${\bf x}={\bf y} -{\bf e} \in \mathcal H_3$. In fact, $\mathcal H_3$ can correct any one error.
}
\end{example}

Let $\x=(x_1, \dots, x_n), \y=(y_1, \dots, y_n) \in \F_q^n$.
The {\em Euclidean inner product} of $\x$ and $\y$ is defined by
\[
\left< \x, \y\right> =\sum_{i=1}^n x_iy_i.
\]

The {\em dual} $\C^{\perp}$ of $\C$ is defined by
\[
\C^{\perp} = \{ \x \in \F_q^n ~|~ \left<\x, {\bf c}\right> = 0 {\mbox{ for all }} {\bf c} \in \C \}.
\]
It is easy to see that if $G$ and $H$ are generator and parity check matrices, respectively, for $\C$, then $H$ and $G$ are generator and parity check matrices, respectively, for $\C^{\perp}$.

\begin{definition}{\em
An $[n,k]$ code $\C$ over $\F_q$ is {\em self-orthogonal} if $\C \subset \C^{\perp}$ and {\em self-dual} if  $\C = \C^{\perp}$.
}
\end{definition}
The length
$n$ of a self-dual code is even and its dimension is $n/2$.

\begin{definition}{\em
The{\em (Hamming) distance} $d(\x, \y)$ between two vectors $\x, \y \in  \F_q^n$ is defined to be $|\{ i ~|~ x_i \ne y_i, 1 \le i \le n \}|$. The {\em (Hamming) weight} wt($\x$)
of a vector $\x \in  \F_q^n$ is $|\{ i ~|~ x_i \ne 0, 1 \le i \le n \}|$. The {\em minimum weight} or {\em minimum distance $d$} of a linear $[n,k]$ code $\C$ is $\min\{wt(\x)~|~ {\bf 0} \ne \x \in \C \}$.

 An {\em $[n,k,d]$ code} (over $\F_q$) is an $[n, k]$ code over $\mathbb F_q$ with minimum distance $d$ (see~\cite{Hub}). For example, the binary $[7, 4]$ Hamming code $\mathcal H_3$ has minimum distance 3 and so it is a $[7,4,3]$ code.
}
\end{definition}

Given an $[n, k, d]$ code over $\F_q$, $d$ determines the error-correcting capability as follows.

\begin{theorem}{\rm (\cite[Theorem 1.11.4]{HuffmanPless})}
If $\C$ is an $[n, k, d]$ code over $\F_q$, then it can correct up to $t=\lfloor (d-1)/2 \rfloor$ errors.
\end{theorem}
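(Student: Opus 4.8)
The plan is to prove this via minimum-distance (nearest-neighbor) decoding, using the fact that the Hamming distance is a metric. First I would record the triangle inequality: for all $\x, \y, \z \in \F_q^n$ we have $d(\x, \z) \le d(\x, \y) + d(\y, \z)$. This follows by examining coordinates, since any position where $\x$ and $\z$ disagree must be a position where $\x$ differs from $\y$ or $\y$ differs from $\z$, so the disagreement set of $(\x,\z)$ is contained in the union of the disagreement sets of $(\x,\y)$ and $(\y,\z)$. I would also note that for a linear code the minimum distance $d$ equals the minimum distance between two \emph{distinct} codewords, because $d(\x, \y) = \wt(\x - \y)$ and $\x - \y$ runs over all nonzero codewords as distinct $\x, \y$ range over $\C$.

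Next I would make the notion of correcting errors precise: a codeword $\vc \in \C$ is transmitted, at most $t$ coordinates are altered, and a vector $\y$ with $d(\vc, \y) \le t$ is received; the decoder outputs the codeword nearest to $\y$ in Hamming distance. It then suffices to show that $\vc$ is the \emph{unique} codeword within distance $t$ of $\y$, so that this rule necessarily returns the transmitted $\vc$.

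The core of the argument is to show that the Hamming balls $B_t(\vc) = \{\y \in \F_q^n : d(\vc, \y) \le t\}$ of radius $t = \lfloor (d-1)/2 \rfloor$, centered at the codewords of $\C$, are pairwise disjoint. Suppose instead that some $\y$ lies in both $B_t(\vc_1)$ and $B_t(\vc_2)$ for distinct $\vc_1, \vc_2 \in \C$. Then the triangle inequality gives $d(\vc_1, \vc_2) \le d(\vc_1, \y) + d(\y, \vc_2) \le t + t = 2t$. Since $t = \lfloor (d-1)/2 \rfloor$ forces $2t \le d-1 < d$, this contradicts the fact established above that distinct codewords are at distance at least $d$. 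Hence the balls are disjoint.

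Finally, because at most $t$ transmission errors place the received vector $\y$ inside $B_t(\vc)$ for the transmitted $\vc$, and disjointness guarantees no other codeword's ball contains $\y$, nearest-neighbor decoding recovers $\vc$ uniquely; this proves the claim. The only real obstacle is verifying the triangle inequality for the Hamming distance and handling the floor bound $2\lfloor (d-1)/2 \rfloor \le d-1$ correctly; both are elementary, so I expect no serious difficulty.
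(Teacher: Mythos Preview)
Your argument is correct and is the standard proof: verify the triangle inequality for the Hamming metric, note that minimum weight equals minimum pairwise distance for a linear code, and then show that Hamming balls of radius $t=\lfloor (d-1)/2\rfloor$ about distinct codewords are disjoint, so nearest-neighbor decoding recovers the transmitted word after at most $t$ errors. The paper itself does not supply a proof of this theorem; it merely quotes it from \cite[Theorem~1.11.4]{HuffmanPless} as background, so there is nothing to compare against and your write-up stands on its own.
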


\medskip

From now on, we describe basic facts on fuzzy linear codes from~\cite{ShuGan}.
Let $A$ be a fuzzy set in $\F_q^n$ with a membership function $A(\x)$.
 A fuzzy linear subspace is naturally defined as follows.

\begin{definition}[\cite{KatLiu},\cite{ShuGan}] \label{def-fuzzy linear subspace}

{\em
Let $A$ be a fuzzy set in $\F_q^n$ with a membership function $A(\x)$.
If
\begin{enumerate}

\item[{(i)}] $A(\x+ \y) \ge \min \{A(\x), A(\y) \}$ for any $\x, \y \in \F_q^n$ {\mbox{ and }}

\item[{(ii)}] $A(\lambda x ) \ge A(\x)$ for any $\x \in \F_q^n$ and $\lambda \in \F_q$, then
\end{enumerate}

we call $A$ a {\em fuzzy linear subspace} of $\F_q^n$. If $A_{\alpha} = \{ \x | A(\x) \ge \alpha \}$, it is called an {\em upper $\alpha$-level cut}.

}
\end{definition}

\begin{lem}[\cite{ShuGan}] \label{lem-lin-comb}
A fuzzy set $A$ in $\F_q^n$ is a fuzzy linear subspace if and only if for any $\alpha, \beta \in \F_q$ and $\x, \y \in \F_q^n$,
$A(\alpha \x + \beta \y) \ge \min \{A(\x), A(\y) \}$.
\end{lem}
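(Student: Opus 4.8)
The plan is to prove the two implications separately, since the statement is a biconditional. The forward direction ($\Rightarrow$) is the substantive one, and the converse ($\Leftarrow$) will follow by specializing the scalars.

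For the forward direction, I would assume $A$ satisfies conditions (i) and (ii) of the definition of a fuzzy linear subspace. Given arbitrary $\alpha, \beta \in \F_q$ and $\x, \y \in \F_q^n$, I would first apply the subadditivity condition (i) to the \emph{two vectors} $\alpha\x$ and $\beta\y$ (rather than to $\x$ and $\y$), obtaining $A(\alpha\x + \beta\y) \ge \min\{A(\alpha\x), A(\beta\y)\}$. Then I would invoke the scalar condition (ii) to lower-bound each argument separately, namely $A(\alpha\x) \ge A(\x)$ and $A(\beta\y) \ge A(\y)$. Since $\min$ is monotone in each of its slots, these combine to give $\min\{A(\alpha\x), A(\beta\y)\} \ge \min\{A(\x), A(\y)\}$, and chaining the two inequalities yields the desired bound $A(\alpha\x + \beta\y) \ge \min\{A(\x), A(\y)\}$.

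For the converse, I would assume the single combined inequality holds for all scalars and vectors, and then recover (i) and (ii) by choosing convenient scalars. Taking $\alpha = \beta = 1$ immediately recovers condition (i). For condition (ii), I would set $\beta = 0$, $\alpha = \lambda$, and $\y = \x$, so that $\alpha\x + \beta\y = \lambda\x$ while the right-hand side collapses to $\min\{A(\x), A(\x)\} = A(\x)$, giving $A(\lambda\x) \ge A(\x)$, exactly condition (ii).

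I expect no serious obstacle; the argument is essentially bookkeeping in the order of quantifiers together with the monotonicity of $\min$. The one point requiring mild care is the converse: a naive choice such as $\beta = 0$ with an \emph{arbitrary} $\y$ yields only $A(\lambda\x) \ge \min\{A(\x), A(\y)\}$, which is weaker than (ii); it is the freedom to take $\y = \x$ (the hypothesis holds for \emph{every} $\y$) that pins down $A(\lambda\x) \ge A(\x)$. Everything else is a direct application of the two defining conditions.
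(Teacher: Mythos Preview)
Your argument is correct in both directions and is the standard verification of this equivalence. Note that the paper does not actually supply a proof of this lemma; it is quoted from \cite{ShuGan} without argument, so there is no in-paper proof to compare against.
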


\begin{lem}[\cite{ShuGan}]\label{lem-alpha-sub}
\label{lem-lin-space-1} A fuzzy set $A$ is a fuzzy linear subspace of $\F_q^n$ on $\F_q$ if for any $\alpha \in [0,~1]$ such that $A_{\alpha} \ne \phi,$ $A_{\alpha}$ is a linear subspace of $\F_q^n$.
\end{lem}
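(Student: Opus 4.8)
The plan is to reduce the claim to the single inequality supplied by Lemma~\ref{lem-lin-comb}. That lemma already characterizes ``$A$ is a fuzzy linear subspace'' by the condition $A(\alpha\x + \beta\y) \ge \min\{A(\x),A(\y)\}$ holding for all scalars $\alpha,\beta \in \F_q$ and all vectors $\x,\y \in \F_q^n$. Hence it suffices to derive this one inequality from the hypothesis that every nonempty upper level cut $A_\gamma$ (with $\gamma \in [0,1]$) is a linear subspace of $\F_q^n$.

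First I would fix arbitrary $\x,\y \in \F_q^n$ and arbitrary scalars $\alpha,\beta \in \F_q$, and introduce the threshold $t := \min\{A(\x),A(\y)\}$. Since membership values lie in $[0,1]$, this $t$ automatically satisfies $t \in [0,1]$. Recalling that $A_t = \{\z \in \F_q^n \mid A(\z) \ge t\}$, the inequalities $A(\x)\ge t$ and $A(\y)\ge t$ say precisely that $\x,\y \in A_t$; in particular $A_t \ne \phi$.

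The key step is then immediate: because $A_t$ is a nonempty level cut with $t \in [0,1]$, the hypothesis forces $A_t$ to be a linear subspace, so it is closed under linear combinations. As $\x,\y \in A_t$, this yields $\alpha\x + \beta\y \in A_t$, which unwinds to $A(\alpha\x+\beta\y) \ge t = \min\{A(\x),A(\y)\}$. Invoking Lemma~\ref{lem-lin-comb} then completes the argument.

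I do not anticipate any real obstacle; the proof is a direct ``level-cut'' translation. The only point deserving care is the choice of the threshold $t = \min\{A(\x),A(\y)\}$ rather than one of $A(\x)$ or $A(\y)$ individually: taking the minimum is exactly what places both $\x$ and $\y$ in the \emph{same} subspace $A_t$, after which closure of that subspace does all the work. One should also observe that this $t$ lies in $[0,1]$, so the hypothesis is genuinely applicable to the cut $A_t$.
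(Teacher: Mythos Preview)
Your argument is correct. The level-cut trick---taking $t=\min\{A(\x),A(\y)\}$ so that both $\x$ and $\y$ land in the same nonempty cut $A_t$, then using that $A_t$ is a subspace---is exactly the right mechanism, and routing the conclusion through Lemma~\ref{lem-lin-comb} is clean.

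As for comparison: the paper does not supply its own proof of this lemma. It is stated with a citation to~\cite{ShuGan} and left unproved; the paper only proves the converse direction (the subsequent lemma). So there is nothing in the paper to compare your argument against. Your proof is the standard one and would serve perfectly well as a self-contained justification. If you wanted to avoid appealing to Lemma~\ref{lem-lin-comb}, you could just as easily verify the two conditions of Definition~\ref{def-fuzzy linear subspace} directly by the same level-cut device (take $t=\min\{A(\x),A(\y)\}$ for condition~(i) and $t=A(\x)$ for condition~(ii)), but going through Lemma~\ref{lem-lin-comb} collapses these into a single check, which is a minor gain in economy.
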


The converse of Lemma~\ref{lem-lin-space-1} is also true.

\begin{lem}
If $A$ is a fuzzy linear subspace of $\F_q^n$ on $\F_q$,
then for any $\alpha \in [0,~1]$ such that $A_{\alpha} \ne \phi,$ $A_{\alpha}$ is a linear subspace of $\F_q^n$.
\end{lem}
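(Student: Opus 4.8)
The plan is to invoke the one-line characterization of a fuzzy linear subspace supplied by Lemma~\ref{lem-lin-comb} and then reduce the claim to the standard subspace test: a nonempty subset $S \subseteq \F_q^n$ is a linear subspace precisely when $\lambda \x + \mu \y \in S$ for all $\x, \y \in S$ and all $\lambda, \mu \in \F_q$. Since $A$ is assumed to be a fuzzy linear subspace, this equivalent form of the defining inequalities is exactly what converts the fuzzy hypotheses into the crisp closure condition we need on each level cut.

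First I would fix $\alpha \in [0,1]$ with $A_{\alpha} \ne \phi$ and pick $\x, \y \in A_{\alpha}$, so that $A(\x) \ge \alpha$ and $A(\y) \ge \alpha$ by the definition of the upper $\alpha$-level cut. For arbitrary $\lambda, \mu \in \F_q$, Lemma~\ref{lem-lin-comb} gives
\[
A(\lambda \x + \mu \y) \ge \min\{A(\x), A(\y)\} \ge \alpha,
\]
whence $\lambda \x + \mu \y \in A_{\alpha}$. Thus $A_{\alpha}$ is closed under arbitrary $\F_q$-linear combinations of its elements, which simultaneously subsumes closure under addition (condition (i)) and under scalar multiplication (condition (ii)), so no separate verification of those two axioms is required.

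It remains only to confirm that the zero vector lies in $A_{\alpha}$: taking $\lambda = \mu = 0$ in the displayed inequality (or applying condition (ii) of Definition~\ref{def-fuzzy linear subspace} with $\lambda = 0$) shows ${\bf 0} \in A_{\alpha}$. Since $A_{\alpha}$ is nonempty and closed under linear combinations, the subspace test is met and $A_{\alpha}$ is a linear subspace of $\F_q^n$.

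I do not anticipate a genuine obstacle here; once Lemma~\ref{lem-lin-comb} is in hand, the argument is essentially bookkeeping. The only point deserving care is the translation between the pointwise membership inequalities that define a fuzzy linear subspace and the threshold condition $A(\cdot) \ge \alpha$ that defines the cut, namely recognizing that $\min\{A(\x), A(\y)\} \ge \alpha$ is immediate from $\x, \y \in A_{\alpha}$. Degenerate cases (such as $\x = \y$, or $\lambda$ and $\mu$ both zero) are automatically absorbed by the same inequality, so they need no special treatment.
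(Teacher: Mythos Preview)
Your proof is correct and follows essentially the same approach as the paper: both arguments show that $\x,\y \in A_{\alpha}$ forces the relevant combination to have membership at least $\min\{A(\x),A(\y)\} \ge \alpha$. The only cosmetic difference is that the paper verifies closure under addition and scalar multiplication separately from conditions (i) and (ii) of Definition~\ref{def-fuzzy linear subspace}, whereas you invoke Lemma~\ref{lem-lin-comb} to handle arbitrary linear combinations in one stroke.
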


\begin{proof}
Suppose that $A$ is a fuzzy linear subspace of $\F_q^n$ on $\F_q$. Let $\alpha \in [0,~1]$ satisfy $A_{\alpha} \ne \phi$.
Assume that $\x, \y \in A_{\alpha}$. Then $A(\x) \ge \alpha$ and $A(\y) \ge \alpha$. Thus $\min \{A(\x), A(\y) \} \ge \alpha$. Therefore, as $A(\x+ \y) \ge \min \{A(\x), A(\y) \}$, we have $A(\x+ \y) \ge \alpha$, which implies $\x+ \y \in A_{\alpha}$. Similarly, if $\x \in A_{\alpha}$, then $A(\x) \ge \alpha$. Hence, since $A(\lambda \x) \ge A(\x)$ for any $\lambda \in \F_q$, we have $A(\lambda \x) \ge \alpha$. This implies that $\lambda \x \in A_{\alpha}$. Therefore, $A_{\alpha}$ is a linear subspace of $\F_q^n$.
\end{proof}

 One can naturally define a fuzzy linear code as follows.

\begin{definition}[\cite{ShuGan}] \label{def-fuzzy-lin}
{\em
A fuzzy set $A$ in $\F_q^n$ is a {\em fuzzy linear code} if for any $\alpha \in [0, ~1]$ such that $A_{\alpha} \ne \phi$, $A_{\alpha}$ is a linear code over $\F_q$.
}
\end{definition}

Let $A$ be a fuzzy linear code of length $n$ over $\mathbb F_q$.
Let Im$(A)=\{A(\x)| \x \in \F_q^n\}$. Suppose that there are $m$ elements in Im$(A)$ with an order of $\alpha_0 > \alpha_1 > \cdots > \alpha_{m-1}$.
 It is easy to see that $m \le q^n$ as there are $q^n$ elements in $\mathbb F_q^n$.
 Let the generator matrix of $A_{\alpha_k}$ in $G_{\alpha_k}$. Then $A$ can be determined by $m$ matrices
$G_{\alpha_0}, G_{\alpha_1}, \dots, G_{\alpha_{m-1}}$, where $G_{\alpha_i}$ is a submatrix of $G_{\alpha_{i+1}}$ for $i=0, \dots, m-2$. These matrices generate nested linear codes $\C_{\alpha_0} \subset \C_{\alpha_1} \subset \cdots \subset \C_{\alpha_{m-1}}$.

\begin{lem}\label{lem-one-to-one}
Let $A$ be a fuzzy linear code of length $n$ over $\mathbb F_q$.
Let Im$(A)=\{A(x)~|~ x \in \mathbb F_q^n\}$ with an order of $\alpha_0 > \alpha_1 > \cdots > \alpha_{m-1}$. Let the corresponding upper $\alpha_i$-level cut be $A_{\alpha_i}$.
If $A_{\alpha_i} = A_{\alpha_j}$, then $\alpha_i=\alpha_j$.
\end{lem}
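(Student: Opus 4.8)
The plan is to prove the contrapositive: assuming $\alpha_i \neq \alpha_j$, I would show that $A_{\alpha_i} \neq A_{\alpha_j}$. Since the listed values are arranged in strictly decreasing order $\alpha_0 > \alpha_1 > \cdots > \alpha_{m-1}$, the condition $\alpha_i \neq \alpha_j$ is equivalent to $i \neq j$, and without loss of generality I may assume $\alpha_i > \alpha_j$, so that $i < j$. It then suffices to exhibit an element lying in $A_{\alpha_j}$ but not in $A_{\alpha_i}$.

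The key observation is that each $\alpha_j$ belongs to $\mathrm{Im}(A)$ and is therefore \emph{actually attained}: there exists some $\z \in \F_q^n$ with $A(\z) = \alpha_j$. This witness separates the two cuts. Indeed, from $A(\z) = \alpha_j \ge \alpha_j$ we obtain $\z \in A_{\alpha_j}$, while from $A(\z) = \alpha_j < \alpha_i$ we obtain $\z \notin A_{\alpha_i}$. Hence $\z \in A_{\alpha_j} \setminus A_{\alpha_i}$, which shows $A_{\alpha_i} \neq A_{\alpha_j}$ (in fact $A_{\alpha_i} \subsetneq A_{\alpha_j}$, consistent with the nesting noted before the lemma). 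This establishes the contrapositive, and hence the statement that $A_{\alpha_i} = A_{\alpha_j}$ forces $\alpha_i = \alpha_j$.

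There is no substantial obstacle in this argument; it is essentially a one-line consequence of the definition of the upper $\alpha$-level cut together with the hypothesis that the thresholds come from $\mathrm{Im}(A)$. The only point worth emphasizing is precisely where that hypothesis enters: it is the attainment of the value $\alpha_j$ that produces the separating vector $\z$. For an arbitrary threshold not in $\mathrm{Im}(A)$, the corresponding cut could coincide with the cut at a strictly larger threshold, and the conclusion would fail. Restricting attention to the values actually taken by $A$ is exactly what makes the correspondence $\alpha_i \mapsto A_{\alpha_i}$ injective, which is the content of this lemma.
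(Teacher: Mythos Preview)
Your proof is correct and follows essentially the same approach as the paper's own proof: both argue by contrapositive, assume without loss of generality that $\alpha_i > \alpha_j$, and use the fact that $\alpha_j \in \mathrm{Im}(A)$ to produce a witness $\z$ with $A(\z)=\alpha_j$ lying in $A_{\alpha_j}\setminus A_{\alpha_i}$. Your additional remark on why the hypothesis $\alpha_j \in \mathrm{Im}(A)$ is essential is a nice clarification not made explicit in the paper.
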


\begin{proof}
Suppose that $\alpha_i \ne \alpha_j$. May assume that $\alpha_i > \alpha_j$. Then it is clear that
$A_{\alpha_i} \subseteq A_{\alpha_j}$. Since $\alpha_j$ is in $Im(A)$, choose $x \in A_{\alpha_j}$ such that $A(x)=\alpha_j$. Then $x$ is not in $A_{\alpha_i}$ since if so, then $A(x) \ge \alpha_i > \alpha_j=A(x)$, which is a contradiction.
Therefore, $A_{\alpha_i} \subsetneq A_{\alpha_j}$, which implies that $A_{\alpha_i} \neq A_{\alpha_j}$.
\end{proof}

\begin{example}{\em

Consider a fuzzy subset $A$ of $V_3=\mathbb F_2^3$ defined by
\[
A: V_3 \rightarrow [0,1],
\]
such that
\[A(\x) = \left\{
\begin{array}{ll}
		1  & \mbox{if } \x=(0,0,0) \\
	   1/2 & \mbox{if } \x=(1,1,0), (1,0,1), (0,1,1) \\
       1/3 & \mbox{if } \x=(1,1,1), (1,0,0), (0,1,0), (0,0,1). \\
	\end{array}
\right.
\]

It is easy to see that $A_{1}= \{(0,0,0) \}$ is a trivial linear code of length 3, $A_{1/2} = \{ (0,0,0), (1,1,0), (1,0,1), (0,1,1) \}$ is a linear $[3,2]$ code, and $A_{1/3} = V_3$. Therefore, $A$ is a fuzzy linear code in $\mathbb F_2^3$.
}
\end{example}

\section{Arithmetic of fuzzy linear codes and extension principle}

As in~\cite{SyrGra}, we can consider the arithmetic of fuzzy linear codes.

\begin{definition} \label{def-sum} Let  $A$ and $B$ be fuzzy linear codes in $\F_q^n$. Define
\begin{enumerate}

\item [{\rm {(i)}}] $(A \cap B)(\x)= \min \{A(\x), {\color{blue}{B(\x)}} \}$ for any $\x \in \F_q^n$,

\item [{\rm {(ii)}}] $(A \cup B)(\x) = \max \{A(\x), {\color{blue}{B(\x)}} \}$ for any $\x \in \F_q^n$.

\end{enumerate}

\end{definition}

\begin{proposition}
Let  $A$ and $B$ be fuzzy linear codes in $\F_q^n$. Then $A \cap B$ is a fuzzy linear code
in $\F_q^n$.
\end{proposition}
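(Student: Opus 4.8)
The plan is to verify directly that $A \cap B$ satisfies the defining property of a fuzzy linear code: each of its nonempty upper $\alpha$-level cuts should be a linear code (subspace) of $\F_q^n$. The key elementary observation is that taking a minimum commutes with forming level cuts, so the level cuts of $A \cap B$ are precisely the intersections of the corresponding level cuts of $A$ and $B$; the result then reduces to the standard fact that subspaces are closed under intersection.

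First I would fix $\alpha \in [0,1]$ and unwind the definition $(A \cap B)(\x) = \min\{A(\x), B(\x)\}$ from Definition~\ref{def-sum} to obtain
\[
(A \cap B)_\alpha = \{\x \in \F_q^n ~|~ \min\{A(\x), B(\x)\} \ge \alpha\} = A_\alpha \cap B_\alpha,
\]
the last equality holding because $\min\{A(\x), B(\x)\} \ge \alpha$ is equivalent to $A(\x) \ge \alpha$ together with $B(\x) \ge \alpha$. Then, whenever $(A \cap B)_\alpha$ is nonempty, it equals $A_\alpha \cap B_\alpha$; since $A$ and $B$ are fuzzy linear codes, both $A_\alpha$ and $B_\alpha$ are linear subspaces of $\F_q^n$, and the intersection of two subspaces is again a subspace. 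Hence $(A \cap B)_\alpha$ is a linear code, which is exactly what the definition of a fuzzy linear code requires.

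Alternatively, one could bypass level cuts and use the linear-combination characterization of Lemma~\ref{lem-lin-comb}. Applying that lemma to $A$ and to $B$ separately gives $A(\alpha\x + \beta\y) \ge \min\{A(\x), A(\y)\}$ and $B(\alpha\x + \beta\y) \ge \min\{B(\x), B(\y)\}$ for all $\alpha, \beta \in \F_q$ and $\x, \y \in \F_q^n$. Taking the minimum of these two inequalities and regrouping the four quantities $A(\x), A(\y), B(\x), B(\y)$ would yield $(A \cap B)(\alpha\x + \beta\y) \ge \min\{(A \cap B)(\x), (A \cap B)(\y)\}$, so that $A \cap B$ is a fuzzy linear subspace, hence a fuzzy linear code, by the same lemma.

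I do not expect any genuine obstacle here, since the statement is essentially a routine consequence of closure of subspaces under intersection. The only points deserving care are the level-cut identity $(A \cap B)_\alpha = A_\alpha \cap B_\alpha$, which rests on reading the right-hand side of Definition~\ref{def-sum}(i) as $\min\{A(\x), B(\x)\}$, and the remark that $A_\alpha \cap B_\alpha$ always contains the zero vector, so the ``nonempty'' hypothesis in the definition of a fuzzy linear code is automatically satisfied.
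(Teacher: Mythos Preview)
Your proposal is correct. Your primary level-cut argument takes a genuinely different route from the paper: the paper verifies the two conditions of Definition~\ref{def-fuzzy linear subspace} directly, showing $(A\cap B)(\x+\y)\ge\min\{(A\cap B)(\x),(A\cap B)(\y)\}$ and $(A\cap B)(\lambda\x)\ge(A\cap B)(\x)$ by a short chain of min-rearrangements --- which is essentially your \emph{alternative} approach via Lemma~\ref{lem-lin-comb}. The level-cut route is arguably cleaner, reducing the statement immediately to the standard fact that subspaces are closed under intersection with no juggling of minima; the paper's membership-function verification has the mild advantage of not needing to pass through Lemma~\ref{lem-alpha-sub} (and its converse) to translate between the two characterizations of a fuzzy linear subspace.

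One small quibble: your closing remark that $A_\alpha\cap B_\alpha$ ``always contains the zero vector'' is not literally true (it fails for $\alpha>\min\{A({\bf 0}),B({\bf 0})\}$), but this is harmless since your main argument already conditions on $(A\cap B)_\alpha$ being nonempty, which forces both $A_\alpha$ and $B_\alpha$ to be nonempty and hence linear.
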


\begin{proof}
Note that $A \cap B$  is a fuzzy set.
We want to show that $A \cap B$ is a fuzzy linear code in $\F_q^n$.

Let $\x, \y \in \mathbb F_q^n$.
Since $A$ and $B$ are fuzzy linear codes in $\F_q^n$, we have
$A(\x+ \y) \ge \min\{ A(\x), A(\y)$ and $B(\x+ \y) \ge \min\{ B(\x), B(\y)\}$.
Therefore, we obtain the following:
\begin{eqnarray*}
(A \cap B)(\x+\y)& =    &\min \{ A (\x+ \y), B(\x+ \y) \} {\mbox{ by Definition~\ref{def-sum} }} \\
                 & \ge  &\min \{ \min\{ A(\x), A(\y)\},   \min\{ B(\x), B(\y)\} \} \\
                 & =     &\min\{ A(\x), A(\y),  B(\x), B(\y) \} \\
                 & =     &\min\{ A(\x), B(\x),  A(\y), B(\y) \} \\
                 & =     &\min \{ \min\{ A(\x), B(\x) \},  \min\{ A(\y), B(\y)\} \} \\
                 & =     &\min \{ (A \cap B)(\x),  (A \cap B)(\y) \} {\mbox{ by  Definition~\ref{def-sum}. }}
\end{eqnarray*}
Hence,
\[(A \cap B)(\x+\y) \ge \min\{(A \cap B)(\x), (A \cap B)(\y) \} {\mbox{ for any }} \x, \y \in \mathbb F_q^n.
\]

Furthermore, since $A$ and $B$ are fuzzy linear codes, we have
  $A (\lambda \x) \ge A(x)$ and $B(\lambda \x) \ge B(\x)$, for any $\x \in \mathbb F_q^n$ and $\lambda \in \F_q$. Therefore, we have the following:
\begin{eqnarray*}
(A \cap B)(\lambda \x) & =    &  \min \{ A (\lambda \x), B(\lambda \x) \} \\
                       & \ge  &  \min \{ A(\x),   B(\x)  \} \\
                       & =  &  (A \cap B)(\x).
\end{eqnarray*}

By Definition~\ref{def-fuzzy linear subspace}, $A \cap B$ is a fuzzy linear code in $\mathbb F_q^n$.
\end{proof}

\begin{remark}{\em
It is easy to see that $A \cup B$ is not a fuzzy linear code in $\mathbb F_q^n$. For example,
consider a fuzzy linear code $A$ in $\mathbb F_2^2$ defined by
\[
A: \mathbb F_2^2 \rightarrow [0,1] {\mbox{ such that }} A(\x) = \left\{
\begin{array}{ll}
		1  & \mbox{if } \x \in \left< (1,0) \right> \\
	    0  & \mbox{otherwise} \\
	\end{array}
\right.
\]
and a fuzzy linear code $B$ in $\mathbb F_2^2$ defined by
\[
B: \mathbb F_2^2 \rightarrow [0,1] {\mbox{ such that }} B(\x) = \left\{
\begin{array}{ll}
		1  & \mbox{if } \x \in \left< (0,1) \right> \\
	    0  & \mbox{otherwise.} \\
	\end{array}
\right.
\]

Let ${\bf e}_1 = (1,0)$ and ${\bf e}_2 =(0,1)$, so that ${\bf e}_1 + {\bf e}_2=(1,1)$.
Then
\[
(A \cup B)({\bf e}_1 + {\bf e}_2)= \max \{ A({\bf e}_1 + {\bf e}_2), B({\bf e}_1 + {\bf e}_2) \} =\max\{ 0, 0 \}=0.
\]

On the other hand,
\begin{eqnarray*}
& &\min\{(A \cup B)({\bf e}_1), (A \cup B)({\bf e}_2) \} \\
& = &\min \{ \max \{ A({\bf e}_1), B({\bf e}_1) \},  \max \{ A({\bf e}_2), B({\bf e}_2) \} \}\\
& =&  \min \{ 1, 1\} =1.
\end{eqnarray*}
Hence,
$(A \cup B)({\bf e}_1 + {\bf e}_2) \not \ge  \min\{(A \cup B)({\bf e}_1), (A \cup B)({\bf e}_2) \}$.
 This implies that $A \cup B$ is not a fuzzy linear code in $\mathbb F_2^2$.
}
\end{remark}

Using the extension principle (see~\cite[Sec. 3.2.3]{SyrGra}), we can define the sum of two fuzzy linear codes as follows.
\begin{definition} \label{def-fuzzy-sum}
Let  $A$ and $B$ be fuzzy linear codes in $\F_q^n$. Let $\z \in \F_q^n$. Define a fuzzy set $A+B$ in $\F_q^n$ as follows:
\begin{eqnarray*}
(A + B)(\z) & = & \max_{\z=\x+\y} \min \{ A(\x), B(\y) \} \\
            & = & {\color{blue}{\max_{\x} ~ \min \{ A(\x), B(\z -\x)}} \}   
\end{eqnarray*}
\end{definition}

\begin{proposition} Let $A$ and $B$ be fuzzy linear codes in $\F_q^n$. {\color{blue}{Then
the sum $A + B$ is a fuzzy linear code in $\F_q^n$.}}
\end{proposition}

\begin{proof}

{\color{blue}{
The following reference (a) proved this proposition.

\begin{enumerate}
\item[{(a)}]  P. Lubczonok, Fuzzy vector spaces, Fuzzy Sets Syst. 1990, 38, 329-343.
\end{enumerate}
However, the proof of Proposition 4.12 in Reference (a) is not complete since it only considers Condition (i) of Definition~\ref{def-fuzzy linear subspace}. Hence we include the proof for completeness. Note that the vector space in Reference (a) seems to assume the real vector space but the argument also works for any finite field.

To get a contradiction, suppose that Condition (i) of Definition~\ref{def-fuzzy linear subspace} is not satisfied. Then we have
\[
(*)~ (A + B)(\z_1 + \z_2) <  \min \{(A + B)(\z_1), (A + B)(\z_2) \} {\mbox{ for some }} \z_1, \z_2 \in \F_q^n.
\]
By Definition~\ref{def-fuzzy-sum}, we have that $(A + B)(\z_1) = \min \{ A(\x_1), B(\z_1 - \x_1) \}$ for some $\x_1 \in \F_q^n$ and that $(A + B)(\z_2) = \min \{ A(\x_2), B(\z_2 - \x_2) \}$ for some $\x_2 \in \F_q^n$.
Hence the right hand side of $(*)$ is computed as follows.
\begin{eqnarray*}
&   & \min \{(A + B)(\z_1), (A + B)(\z_2) \} \\
& = & \min \{\min \{ A(\x_1), B(\z_1 - \x_1) \}, \min \{ A(\x_2), B(\z_2 - \x_2) \}  \}   \\
& = & \min \{ A(\x_1), A(\x_2), B(\z_1 - \x_1), B(\z_2 - \x_2) \} \\
& \leq  & \min \{ A(\x_1 + \x_2), B(\z_1 +\z_2 - \x_1 -\x_2) \} {\mbox{ since }} A, B {\mbox{ are fuzzy linear}}\\
& \leq  & (A + B)(\z_1 + \z_2) {\mbox{ by Definition~\ref{def-fuzzy-sum}.}}
\end{eqnarray*}
Then combining $(*)$ together, we obtain
\[
(A + B)(\z_1 + \z_2) < (A + B)(\z_1 + \z_2).
\]
This is a contradiction. Therefore, Condition (i) of Definition~\ref{def-fuzzy linear subspace} is satisfied. 

Next we prove Condition (ii) of Definition~\ref{def-fuzzy linear subspace}.
Let $\z \in \F_q^n$ and $\lambda \in \F_q$.
Then 
\begin{eqnarray*}
 (A+B)(\lambda z) & = & \max_{\lambda\z=\x+\y} \min \{ A(\x), B(\y) \} \\ 
                   & = & \max_{\z=\lambda^{-1} \x+ \lambda^{-1}\y} \min \{ A(\x), B(\y) \} \\
                   & = & \max_{\z=\lambda^{-1} \x+ \lambda^{-1}\y} \min \{ A(\lambda \lambda^{-1}\x), B(\lambda \lambda^{-1} \y) \}  \\
                   & \ge & \max_{\z=\lambda^{-1} \x+ \lambda^{-1}\y} \min \{ A(\lambda^{-1}\x), B(\lambda^{-1} \y) \}  \\
                  & = & \max_{\z=\x_1+ \y_1} \min \{ A(\x_1), B(\y_1) \}  \\
                  & = & (A+B)(\z).
\end{eqnarray*}
Therefore, $A+B$ is a fuzzy linear code in $\F_q^n$.
}}
\end{proof}

{\color{blue}{We describe this proposition using an example. The errors in this example from the original version were pointed out by Hidayat Budiawan on May 9, 2024.}}

Consider fuzzy linear codes $A$ and $B$ in $\F_2^3$ as follows:
\[
A: \mathbb F_2^3 \rightarrow [0,1] {\mbox{ such that }} A(\x) = \left\{
\begin{array}{ll}
		1   & \mbox{if } \x=(000) \\
	    1/2 & \mbox{if } \x=(110), (101), (011) \\
        1/3 & \mbox{if } \x=(111), (100), (010), (001) \\
	\end{array}
\right.
\]
and
\[
B: \mathbb F_2^3 \rightarrow [0,1] {\mbox{ such that }} B(\x) = \left\{
\begin{array}{ll}
		 1 & \mbox{if } \x=(000), (111) \\
	     0 & \mbox{otherwise.}
	\end{array}
\right.
\]
It is easy to see that the upper $\alpha$-level cuts $A_1, A_{1/2}, A_{1/3}, B_{1/4}, B_0$ are linear codes.
Then using the definition of $A+B$,
\begin{eqnarray*}
(A+B)(000) & = & \max_{(000)=\x+\y} \min \{ A(\x), B(\y) \} \\
             & = & \max_{(000)=\x+\y} \{ \min \{ A(000), B(000) \}, \min \{ A(111), B(111)\},\\
             &  &   ~~~~~~~~~~~~~  \min \{ A(110), B(110) \}, \cdots  \} \\
             & = & \max_{(000)=\x+\y} \{ 1, 1, 0, \cdots  \} \\
             & = & 1.
\end{eqnarray*}

Similarly, $(A+B)(111) = 1$. In fact, we have
$
A+B: \mathbb F_2^3 \rightarrow [0,1]
{\mbox{ such that }} $

\[(A+B)(\x) = \left\{
\begin{array}{ll}
		1 & \mbox{if } \x=(000), (111) \\
      1/2 & \mbox{if } {\color{blue}{ \x=(110), (101), (011), (100), (010), (001).}}\\
	\end{array}
\right.
\]
{\color{blue}{
Since the upper $\alpha$-level cuts $(A+B)_{1}$ and $(A+B)_{1/2} = \mathbb F_2^3$
are linear codes, $A+B$ is a fuzzy linear code in $\F_2^3$ by Definition~\ref{def-fuzzy-lin}.}}

\medskip

{\color{blue}{We also see that the direct sum $A \oplus B$ is a fuzzy linear code as follows.}}

\begin{definition} \label{def-direct-sum}
{\em Let  $A$ and $B$ be fuzzy linear codes in $\F_q^n$ and $\F_q^m$, respectively.
The {\em direct sum} of $A$ and $B$ is defined as the fuzzy set $A \oplus B$ in $\F_q^n \oplus \F_q^m$
satisfying
 $(A \oplus B)(\z)= \min \{A(\x), {\color{blue}{B}}(\y) \}$ for any $\z=\x + \y$ where $\x \in \F_q^n$ and $\y \in  \F_q^m$.
}
\end{definition}

\begin{proposition}
Let  $A$ and $B$ be fuzzy linear codes in $\F_q^n$ and $\F_q^m$, respectively. Then $A \oplus B$ is a fuzzy linear code in $\F_q^n \oplus \F_q^m$.
\end{proposition}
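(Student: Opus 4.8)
The plan is to verify directly that $A \oplus B$ satisfies the two defining conditions of a fuzzy linear subspace in Definition~\ref{def-fuzzy linear subspace}; by Lemma~\ref{lem-alpha-sub} together with its converse this is equivalent to $A \oplus B$ being a fuzzy linear code. The argument will parallel the proof given above for $A \cap B$, the only new ingredient being the bookkeeping for the direct-sum decomposition. The key observation is that every $\z \in \F_q^n \oplus \F_q^m$ is written uniquely as $\z = \x + \y$ with $\x \in \F_q^n$ and $\y \in \F_q^m$, and that both addition and scalar multiplication act componentwise on this decomposition; consequently $(A \oplus B)(\z) = \min\{A(\x), B(\y)\}$ is unambiguous.

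For condition (i), I would take $\z_1 = \x_1 + \y_1$ and $\z_2 = \x_2 + \y_2$ with $\x_i \in \F_q^n$ and $\y_i \in \F_q^m$, so that $\z_1 + \z_2 = (\x_1 + \x_2) + (\y_1 + \y_2)$ is again a valid decomposition. Applying Definition~\ref{def-direct-sum} and then the fact that $A$ and $B$ are fuzzy linear codes gives
\[
(A \oplus B)(\z_1 + \z_2) = \min\{A(\x_1 + \x_2),\, B(\y_1 + \y_2)\} \ge \min\{\min\{A(\x_1), A(\x_2)\},\, \min\{B(\y_1), B(\y_2)\}\}.
\]
Since $\min$ is commutative and associative, the right-hand side rearranges to $\min\{\min\{A(\x_1), B(\y_1)\},\, \min\{A(\x_2), B(\y_2)\}\} = \min\{(A \oplus B)(\z_1),\, (A \oplus B)(\z_2)\}$, which is exactly condition (i).

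For condition (ii), for $\lambda \in \F_q$ and $\z = \x + \y$ I would use $\lambda \z = \lambda \x + \lambda \y$ together with $A(\lambda \x) \ge A(\x)$ and $B(\lambda \y) \ge B(\y)$ to obtain $(A \oplus B)(\lambda \z) = \min\{A(\lambda \x), B(\lambda \y)\} \ge \min\{A(\x), B(\y)\} = (A \oplus B)(\z)$. Both conditions of Definition~\ref{def-fuzzy linear subspace} then hold, so $A \oplus B$ is a fuzzy linear code. I do not anticipate any genuine obstacle here: the proof is a routine $\min$-manipulation essentially identical to the $A \cap B$ case. The only point requiring minor care is confirming that the componentwise decompositions of $\z_1 + \z_2$ and of $\lambda \z$ are the ones fed into Definition~\ref{def-direct-sum}, so that $A$ and $B$ are applied to the correct halves of each vector.
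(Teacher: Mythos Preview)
Your proposal is correct and follows essentially the same route as the paper's proof: both verify conditions (i) and (ii) of Definition~\ref{def-fuzzy linear subspace} by decomposing $\z_1,\z_2$ componentwise, applying the fuzzy-linearity of $A$ and $B$ separately, and then regrouping the nested minima. Your explicit remark about the uniqueness of the direct-sum decomposition (ensuring $(A\oplus B)(\z)$ is well defined and that the correct components are fed back into $A$ and $B$) is a small clarification the paper leaves implicit, but otherwise the arguments coincide.
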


\begin{proof}
Let $\z_1=\x_1 + \y_1$ and $\z_2=\x_2 + \y_2$, where $\x_1, \x_2 \in \F_q^n$ and $\y_1, \y_2 \in \F_q^m$.
Then since $A$ and $B$ are fuzzy linear codes, we have
$A(\x_1 + \x_2) \ge \min \{ A(\x_1), A(\x_2) \}$ and  $B(\y_1 + \y_2) \ge \min \{ B(\y_1), B(\y_2) \}$.

Therefore, we have the following:
\begin{eqnarray*}
  (A\oplus B)(\z_1 + \z_2) &=& (A\oplus B)((\x_1 + \x_2) + (\y_1 + \y_2)) \\
   &=& \min\{ A(\x_1 + \x_2), B(\y_1 + \y_2)  \} {\mbox{ by Definition \ref{def-direct-sum}}}\\
   &\ge& \min\{ \min \{ A(\x_1), A(\x_2) \},  \min \{ B(\y_1), B(\y_2) \} \} \\
      & = & \min\{ A(\x_1), A(\x_2), B(\y_1), B(\y_2) \} \\
        & = & \min\{ A(\x_1), B(\y_1), A(\x_2), B(\y_2) \} \\
   & = & \min\{ \min \{ A(\x_1), B(\y_1) \},  \min \{ A(\x_2), B(\x_2) \} \} \\
      & = & \min\{ (A \oplus B)(\x_1+ \y_1),  (A \oplus B)(\x_2+ \y_2) \} \\
   & = & \min\{ (A \oplus B)(\z_1),  (A \oplus B)(\z_2) \}.
\end{eqnarray*}
Similarly, since $A$ and $B$ are fuzzy linear codes, we have $A(\lambda \x_1) \ge   A(\x_1)$ and $B(\lambda \y_1) \ge  B(\y_1)$ for any $\lambda \in \F_q$. Thus, we have the following:
\begin{eqnarray*}
(A\oplus B)(\lambda \z_1) & = & (A\oplus B)(\lambda\x_1 + \lambda \y_1) \\
                        & = & \min \{ A(\lambda\x_1), B(\lambda \y_1) \} {\mbox{ by Definition \ref{def-direct-sum}}}\\
                        & \ge & \min \{  A(\x_1), B(\y_1) \} \\
                        & = &  (A\oplus B)(\x_1+ \y_1) \\
                        & = & (A\oplus B)(\z_1).
\end{eqnarray*}
This completes the proof.
\end{proof}

\section{Fuzzy self-dual codes}

Note that
a given linear code $\C$ defines a fuzzy linear code $A$  where a sequence of linear codes $\A_{\alpha_0} \subset \A_{\alpha_1} \subset \cdots \subset \A_{\alpha_i} =\C  \subset  \cdots \subset \A_{\alpha_{m-1}}$ for some $i$ and $m$ is assigned.  One problem of this association is that there are many non-canonical choices of supercodes $\A_{\alpha_{j'}}$ of $\C$ ($i< j' \le m-1$) of $\C$. If we choose a self-orthogonal code $\C$, then subcodes $\A_{\alpha_j}$ of $\C$ ($j < i$) determine supercodes $\A_{\alpha_{j'}}$ of $\C$ ($j' > i$) because $\A_{\alpha_j}^{\perp} =\A_{\alpha_{j'}}$. From now on, we consider self-orthogonal or self-dual codes and associate them to fuzzy codes, called fuzzy self-orthogonal or self-dual codes.

We recall that the dual of a fuzzy linear code over the modular ring $\mathbb Z_{p^k}$ was defined in~\cite{Ata} using the module theory. This can be naturally defined over $\mathbb F_q$ as follows.

\begin{definition}[\cite{Ata}]
{\em
Let $A$ and $B$ be two fuzzy linear codes over $\mathbb F_q$ . We say that $A$ is {\em orthogonal} to $B$, denoted by $A \perp B$, if $Im(B) = \{1-c ~|~ c \in Im(A) \}$ and for all $t \in [0, 1], B_{1-t} = (A_t)^{\perp} = \{\y \in  \mathbb F_q^n   | \left<\x, \y\right> = 0 {\mbox{ for all }} \x \in A_t \}$.
}
\end{definition}

\begin{theorem}
Let $A$ be a fuzzy linear code over $\mathbb F_q$. Then there exists a
fuzzy linear code $B$ over $\mathbb F_q$ such that $A \perp B$ if and only if $|Im(A)| > 1$  and for any $\gamma \in Im(A)$ there exists $\epsilon \in Im(A)$ such that $A_{\gamma} = (A_{\epsilon})^{\perp}$.
\end{theorem}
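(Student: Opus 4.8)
The plan is to read everything off the chain of level cuts of $A$. Write $\mathrm{Im}(A)=\{\alpha_0>\alpha_1>\cdots>\alpha_{m-1}\}$ and set $C_i:=A_{\alpha_i}$, so that the hypothesis that $A$ is a fuzzy linear code gives a strictly increasing tower of linear codes $C_0\subsetneq C_1\subsetneq\cdots\subsetneq C_{m-1}$ (strictness is the contrapositive of Lemma~\ref{lem-one-to-one}). The defining relation $B_{1-t}=(A_t)^{\perp}$ says precisely that the cuts of $B$ are the Euclidean duals of the cuts of $A$ with the membership value reflected by $t\mapsto 1-t$; since $(\C^{\perp})^{\perp}=\C$ for every linear code, dualizing this relation shows that $A\perp B$ is symmetric, i.e. it also yields $A_{1-s}=(B_s)^{\perp}$. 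Because $\perp$ reverses inclusions, the content of the statement is that $A$ admits a partner exactly when the tower $C_0\subsetneq\cdots\subsetneq C_{m-1}$ is closed under $\perp$, in which case $\perp$ acts on it as an order-reversing involution.

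For sufficiency I would turn the hypothesis into that involution. The map sending $\gamma\in\mathrm{Im}(A)$ to the $\epsilon$ with $A_\gamma=(A_\epsilon)^{\perp}$ is single valued (the code $(A_\gamma)^{\perp}$ determines $A_\epsilon$, and distinct levels give distinct cuts by Lemma~\ref{lem-one-to-one}) and, since $\perp$ reverses the tower, it must be the reflection $\alpha_i\mapsto\alpha_{m-1-i}$; thus $(C_i)^{\perp}=C_{m-1-i}$ for all $i$, which also forces the anchors $C_0=\{{\bf 0}\}$ and $C_{m-1}=\F_q^n$. I would then define $B$ by reflecting the membership value: $B(\x)=1-\alpha_{m-1-i}$ whenever $A(\x)=\alpha_i$. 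Its cuts are exactly the nested codes $(C_i)^{\perp}=C_{m-1-i}$, so $B$ is a fuzzy linear code by Lemma~\ref{lem-alpha-sub}, its image is $\{1-\alpha_i\}=\{1-c\mid c\in\mathrm{Im}(A)\}$, and a direct check of cuts gives $B_{1-t}=(A_t)^{\perp}$, i.e. $A\perp B$.

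For necessity, first $|\mathrm{Im}(A)|>1$: if $A$ were constant the relation $B_{1-t}=(A_t)^{\perp}$ would force $B$ to jump between $\{{\bf 0}\}$ and $\F_q^n$, contradicting that $\mathrm{Im}(B)=\{1-c\}$ is a single value. Assuming a partner $B$ exists, evaluating the relation at $t=\alpha_i$ gives $B_{1-\alpha_i}=(C_i)^{\perp}$, so the $m$ cuts of the fuzzy linear code $B$ are exactly $(C_0)^{\perp}\supsetneq\cdots\supsetneq(C_{m-1})^{\perp}$; feeding this back through the symmetric relation $A_{1-s}=(B_s)^{\perp}$ expresses each cut $C_j$ of $A$ as a dual of a cut of $B$, and matching the two towers of equal length $m$ under $\perp$ (they share the anchors $\{{\bf 0}\}$ and $\F_q^n$) forces each $(C_i)^{\perp}$ to coincide with some $C_j$. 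This is the asserted closure property.

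The step I expect to be the real obstacle is this last matching, together with the boundary bookkeeping it hides. The reflection $t\mapsto 1-t$ interchanges the right-closed level-cut intervals of $A$ with left-closed ones, so the delicate point is to verify that the dual family $\{(A_t)^{\perp}\}$ is genuinely realizable as the cut family of a fuzzy linear code whose image is \emph{exactly} $\{1-c\mid c\in\mathrm{Im}(A)\}$, with no extra value sneaking in at the endpoints (this is what pins down $B({\bf 0})=1$ and the anchors $C_0=\{{\bf 0}\}$, $C_{m-1}=\F_q^n$). Handling these jump and endpoint cases carefully---rather than the essentially formal algebra of dualizing---is where the proof must do its work.
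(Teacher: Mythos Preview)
The paper supplies no proof of its own: it says only that ``the proof is basically the same as that of \cite[Theorem 3.15]{Ata},'' so there is nothing in the text against which to compare your sketch.

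Your sufficiency direction is sound. Once you observe that $\gamma\mapsto\epsilon$ is well defined (by Lemma~\ref{lem-one-to-one}), injective, and order-reversing on the finite chain $\mathrm{Im}(A)$, it must be the unique reflection $\alpha_i\mapsto\alpha_{m-1-i}$; hence $(C_i)^{\perp}=C_{m-1-i}$, and your formula $B(\x)=1-\alpha_{m-1-i}$ for $A(\x)=\alpha_i$ produces a fuzzy linear code with the required image and cuts.

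The necessity direction, however, has a real gap. From $A\perp B$ you correctly read off that the level cuts of $B$ form the tower $C_0^{\perp}\supsetneq\cdots\supsetneq C_{m-1}^{\perp}$, but the assertion that ``matching the two towers of equal length $m$ \ldots\ forces each $(C_i)^{\perp}$ to coincide with some $C_j$'' is unjustified: two strictly nested chains of the same length sharing the anchors $\{{\bf 0}\}$ and $\F_q^n$ need not coincide. In $\F_2^4$, take $C_0=\{{\bf 0}\}$, $C_1=\langle e_1\rangle$, $C_2=\F_2^4$; then $C_1^{\perp}=\langle e_2,e_3,e_4\rangle$ belongs to neither chain. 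Moreover, plugging $s=1-\alpha_j\in\mathrm{Im}(B)$ into the symmetric relation $A_{1-s}=(B_s)^{\perp}$ only returns the tautology $C_j=(C_j^{\perp})^{\perp}$, so symmetry alone yields nothing new. What is actually needed is the clause $B_{1-t}=(A_t)^{\perp}$ at values of $t$ strictly between consecutive $\alpha_i$'s---precisely the half-open endpoint comparison you flagged as ``the real obstacle.'' You located the right difficulty but assigned it to the wrong direction: it is not cosmetic bookkeeping for sufficiency, it is the substance of the necessity argument, and the proof cannot be completed without working it out.
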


\begin{proof}
The proof is basically the same as that of~\cite[Theorem 3.15]{Ata}.
\end{proof}

\begin{theorem}
If $B$ and $C$ are fuzzy linear codes over $\mathbb F_q$ such that $B$ is orthogonal to $A$ and $C$ is orthogonal to $A$, then $B=C$.
\end{theorem}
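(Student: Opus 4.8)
The plan is to prove $B=C$ by showing that $B$ and $C$ have identical upper $\alpha$-level cuts and then invoking the standard fact that a fuzzy set is completely recovered from its level cuts. First I would unpack the hypotheses. Since $B$ is orthogonal to $A$, the definition of orthogonality (with the first code $B$ and the second code $A$) gives $Im(A)=\{1-c \mid c \in Im(B)\}$ together with
\[
A_{1-t}=(B_t)^{\perp} \qquad \text{for every } t \in [0,1].
\]
Likewise, since $C$ is orthogonal to $A$, we obtain $Im(A)=\{1-c \mid c \in Im(C)\}$ and $A_{1-t}=(C_t)^{\perp}$ for every $t \in [0,1]$. In particular the two image conditions force $Im(B)=Im(C)$, a fact I will use below to control the range of relevant levels.

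The key step is to combine the two level-cut identities: for every $t \in [0,1]$,
\[
(B_t)^{\perp}=A_{1-t}=(C_t)^{\perp}.
\]
Now I would use that taking the Euclidean dual is an involution on linear codes over $\F_q$, i.e.\ $(\mathcal D^{\perp})^{\perp}=\mathcal D$ for every linear code $\mathcal D$; this is exactly the generator/parity-check interchange recorded earlier in the excerpt. Because $B$ and $C$ are fuzzy linear codes, each non-empty level cut $B_t$ and $C_t$ is a genuine linear code, so applying $\perp$ once more to the displayed equality yields $B_t=((B_t)^{\perp})^{\perp}=((C_t)^{\perp})^{\perp}=C_t$ for all such $t$.

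The one place that needs care --- and the step I expect to be the main obstacle --- is the behaviour at levels where a cut is empty, since the involution fails for the empty set ($\phi^{\perp}=\F_q^n$ while $(\F_q^n)^{\perp}=\{{\bf 0}\}$). I would dispose of this using $Im(B)=Im(C)$: since $Im(A)$ is finite (so $Im(B)$, $Im(C)$ are finite), $B_t\neq\phi$ exactly when $t\le \max Im(B)$, and the same threshold $\max Im(B)=\max Im(C)$ governs $C$. Thus for $t>\max Im(B)$ both cuts are empty, $B_t=\phi=C_t$, whereas for $t\le\max Im(B)$ both cuts are non-empty linear codes and the involution argument of the previous paragraph applies. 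Hence $B_t=C_t$ for every $t\in[0,1]$.

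Finally I would reconstruct the membership functions from the level cuts: for each $\x\in\F_q^n$ and each $t\in[0,1]$ we have $B(\x)\ge t \iff \x\in B_t \iff \x\in C_t \iff C(\x)\ge t$. Since two numbers $a,b\in[0,1]$ satisfying $a\ge t\iff b\ge t$ for all $t$ must be equal (take $t=a$ and then $t=b$), it follows that $B(\x)=C(\x)$ for every $\x\in\F_q^n$, that is $B=C$. The whole argument is thus routine once the level-cut identities are read off and the empty-cut levels are handled by the image condition.
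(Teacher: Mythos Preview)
Your argument is correct, and indeed the careful treatment of the empty level cuts via the image condition is exactly the point that needs attention. The paper itself does not spell out a proof but simply defers to \cite[Theorem 3.16]{Ata}; your level-cut-plus-involution approach is the natural (and presumably the intended) one, so there is nothing substantive to compare.
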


\begin{proof}
The proof is basically the same as that of~\cite[Theorem 3.16]{Ata}.
\end{proof}

Therefore, if there exists a fuzzy linear code $B$ orthogonal to $A$, then $B$ is unique.
The code $B$ is called the {\em dual} of a fuzzy linear code $A$ and is denoted by $A^{\perp}$.

\begin{cor}
Let $A$ be a fuzzy linear code over $\mathbb F_q$. If $A^{\perp}$ exists, then $(A^{\perp})^{\perp} = A$.
\end{cor}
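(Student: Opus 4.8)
The plan is to exploit the uniqueness of the dual just established in the preceding theorem, so that proving $(A^{\perp})^{\perp} = A$ reduces to verifying a single orthogonality relation rather than manipulating membership functions directly. Write $B = A^{\perp}$. By the definition of the dual, $B$ is the (unique) fuzzy linear code with $A \perp B$, and $(A^{\perp})^{\perp} = B^{\perp}$ is by definition the unique fuzzy linear code orthogonal to $B$. Hence it suffices to show that $A$ itself is orthogonal to $B$, i.e. that $B \perp A$ holds: once this is verified, the uniqueness theorem (applied with $B$ in the role of the fixed code) forces $B^{\perp} = A$, which is exactly the claim.

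The substance of the proof is then to deduce the two defining conditions of $B \perp A$ from the two conditions of $A \perp B$. First I would treat the image condition. From $A \perp B$ we have $Im(B) = \{\, 1-c \mid c \in Im(A)\,\}$, and applying the involution $c \mapsto 1-c$ (a bijection of $[0,1]$) gives $\{\, 1-c \mid c \in Im(B)\,\} = \{\, 1-(1-c') \mid c' \in Im(A)\,\} = Im(A)$, which is precisely the image condition required for $B \perp A$.

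Next I would verify the level-cut condition, namely $A_{1-t} = (B_t)^{\perp}$ for all $t \in [0,1]$. Starting from the second condition of $A \perp B$, which reads $B_{1-t} = (A_t)^{\perp}$ for all $t$, I reindex by setting $s = 1-t$ to obtain $B_s = (A_{1-s})^{\perp}$ for all $s \in [0,1]$. Taking the ordinary dual of both sides and invoking the classical double-dual identity $(\C^{\perp})^{\perp} = \C$, valid for every linear code $\C$ over the finite field $\mathbb F_q$, yields $(B_s)^{\perp} = A_{1-s}$ for all $s$, which is exactly the desired relation (renaming $s$ as $t$). Combining the two verifications gives $B \perp A$, and uniqueness then completes the argument.

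I expect the only genuine delicacy to be the bookkeeping of the index reflection $t \leftrightarrow 1-t$ together with the clean invocation of $(\C^{\perp})^{\perp} = \C$; the latter holds because $\dim \C^{\perp} = n - \dim \C$ while $\C \subseteq (\C^{\perp})^{\perp}$, so equality follows by a dimension count. I would also remark that degenerate level cuts cause no trouble: when some $A_t$ is empty one has $(A_t)^{\perp} = \mathbb F_q^n$, and the same double-dual identity handles this case automatically, so no separate argument is required.
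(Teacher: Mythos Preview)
The paper states this corollary without proof, so there is no argument in the paper to compare against. Your strategy---establish that the orthogonality relation is symmetric by combining the involution $t\mapsto 1-t$ with the classical identity $(\mathcal C^{\perp})^{\perp}=\mathcal C$, and then invoke the preceding uniqueness theorem---is exactly the expected one and is correct in substance.

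One correction to your closing remark: the double-dual identity does \emph{not} extend to the empty set, since $(\emptyset^{\perp})^{\perp}=(\mathbb F_q^n)^{\perp}=\{\mathbf 0\}\neq\emptyset$, so your claim that ``the same double-dual identity handles this case automatically'' is false as written. The right way to dispose of this edge case is to observe that it cannot occur under the hypothesis that $A^{\perp}$ exists. Indeed, evaluating $B_{1-t}=(A_t)^{\perp}$ at $t=0$ gives $B_1=(\mathbb F_q^n)^{\perp}=\{\mathbf 0\}$, so $1\in Im(B)$ and hence, by the image condition, $0\in Im(A)$; a short argument comparing the level cuts near $t=\max Im(A)$ with the image condition then forces $\max Im(A)=1$ as well. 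Consequently $A_t\neq\emptyset$ for every $t\in[0,1]$, each $A_t$ is a genuine linear subspace, and your double-dual step is justified throughout. Replace the inaccurate sentence with this observation and the proof is complete.
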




There is no literature on the definition of a fuzzy self-dual code. Now we can define a fuzzy self-dual code and a fuzzy self-orthogonal code for the first time.

\begin{definition}\label{def-sd-so}
{\em
Let $A$ be a fuzzy linear code over $\mathbb F_q$.

\begin{enumerate}

\item[(i)]
 If $|Im(A)| > 1$  and $(A_{1-\alpha})^{\perp}=A_{\alpha} $ (equivalently, $A_{\alpha}^{\perp} = A_{1-\alpha}$) for any $\alpha \in Im(A)$ and $A_{\beta}^{\perp}=A_{\beta}$ for some $\beta$, then $A$ is called a {\em fuzzy self-dual code}.

\item[(ii)]
If $|Im(A)| > 1$  and $(A_{1-\alpha})^{\perp}=A_{\alpha} $ (equivalently, $A_{\alpha}^{\perp} = A_{1-\alpha}$), then $A$ is called a {\em fuzzy self-orthogonal code}.

\end{enumerate}
}
\end{definition}

\begin{lem}
If $A$ is a fuzzy self-dual code over $\mathbb F_q$ such that $A_{\beta}$ is a self-dual code for some $\beta$, then
$\beta=\frac{1}{2}$.
\end{lem}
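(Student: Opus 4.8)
The plan is to play the fuzzy self-dual property of $A$ against the self-duality of the single cut $A_\beta$, and then appeal to the injectivity of the level-cut map recorded in Lemma~\ref{lem-one-to-one} to pin down the threshold. Concretely, I take $\beta$ to be the membership value in $Im(A)$ that realizes the self-dual cut, so that $\beta \in Im(A)$ and Definition~\ref{def-sd-so}(i) applies at $\alpha=\beta$, giving $A_\beta^\perp = A_{1-\beta}$. The self-duality hypothesis is exactly $A_\beta^\perp = A_\beta$. Comparing the two descriptions of $A_\beta^\perp$ yields the central equality of level cuts, $A_\beta = A_{1-\beta}$.

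From there the finish is short: Lemma~\ref{lem-one-to-one} says that equal (nonempty) level cuts force equal thresholds, so $A_\beta = A_{1-\beta}$ gives $\beta = 1-\beta$, i.e. $\beta = \tfrac12$. As a sanity check I would also note that self-duality of $A_\beta$ forces $\dim A_\beta = n/2$ with $n$ even, which is consistent with a ``middle'' threshold but does not by itself locate $\beta$; the location comes entirely from the cut equality above.

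The delicate point — and the step I expect to be the real obstacle — is that Lemma~\ref{lem-one-to-one} requires \emph{both} $\beta$ and $1-\beta$ to lie in $Im(A)$. Membership of $\beta$ I can arrange by definition, but membership of $1-\beta$ is not immediate from the bare relation $A_\beta = A_{1-\beta}$: a priori one only knows that $A_{1-\beta}$ coincides, as a set, with some cut $A_{\alpha}$ at a genuine value $\alpha \in Im(A)$, and $\alpha$ need not equal $1-\beta$. So before the final step I would make explicit the symmetry of $Im(A)$ that is built into the fuzzy self-dual condition — namely that $A$ coincides with its fuzzy dual $A^\perp$, whose image is $\{\,1-c \mid c \in Im(A)\,\}$, whence $Im(A) = \{\,1-c \mid c \in Im(A)\,\}$ and in particular $1-\beta \in Im(A)$. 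I would either invoke this symmetry directly or derive it by applying the relation $A_\alpha^\perp = A_{1-\alpha}$ across the whole chain and using Lemma~\ref{lem-one-to-one} to match thresholds. Once $1-\beta \in Im(A)$ is secured, the equality $A_\beta = A_{1-\beta}$ forces $\beta = 1-\beta$, completing the argument; omitting this symmetry check is the one place where the proof could silently break.
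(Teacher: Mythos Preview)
Your proposal is correct and follows essentially the same route as the paper: combine the hypothesis $A_\beta^{\perp}=A_\beta$ with the fuzzy self-dual relation $A_\beta^{\perp}=A_{1-\beta}$ from Definition~\ref{def-sd-so}(i), deduce $A_\beta=A_{1-\beta}$, and invoke Lemma~\ref{lem-one-to-one} to conclude $\beta=1-\beta$. You are in fact more careful than the paper, which applies Lemma~\ref{lem-one-to-one} without explicitly checking that $1-\beta\in Im(A)$; your proposed fix via the symmetry $Im(A)=\{1-c\mid c\in Im(A)\}$ is the right way to close that gap.
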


\begin{proof}
By hypothesis, we have $A_{\beta}^{\perp}= A_{\beta}$.
By Definition~\ref{def-sd-so} (i), we know $|Im(A)| > 1$ and $A_{\alpha} = (A_{1-\alpha})^{\perp}$ for any $\alpha \in Im(A)$.
Letting $\alpha=\beta$, we get $A_{\beta}^{\perp}= A_{\beta}=(A_{1-\beta})^{\perp}$, that is, $A_{\beta}=A_{1-\beta}$. Therefore, by Lemma~\ref{lem-one-to-one} we have $\beta=1-\beta$, that is, $\beta=\frac{1}{2}$ as wanted.
\end{proof}

\begin{example}
{\em
Consider a fuzzy subset $A$ of $V_4=\mathbb F_2^4$:

\[
A: V_4 \rightarrow [0,1],
\]
such that
\[A(\x) = \left\{
\begin{array}{ll}
		1  & \mbox{ if } \x=(0,0,0,0) \\
	   1/2 & \mbox{ if } \x=(1,0,1,0), (0,1,0,1), (1,1,1,1) \\
       0 & \mbox{ for other } \x \in V_4. \\
	\end{array}
\right.
\]

It is easy to check that
$A_{1}= \{(0,0,0,0) \}$ is a trivial linear code of length 4,
$A_{1/2} = \{ (0,0,0,0), (1,0,1,0), (0,1,0,1), (1,1,1,1) \} =A_{(1-1/2)}$ is a linear $[4,2]$ code, and $A_{0} = V_4=A_{(1-0)}^{\perp}$ is a full space.
Hence, $A$ is a fuzzy self-dual code.
}
\end{example}






Let $\overline{\mathcal H}_3$ be the extended Hamming $[8,4,4]$ code in $V_8=\mathbb F_2^8$, which is self-dual, that is, $\overline{\mathcal H}_3=\overline{\mathcal H}_3^{\perp}$. It has the following generator matrix.
\begin{equation}\label{eq-Ham-8-4-4}
G(\overline{\mathcal H}_3)=\left[\begin{array}{cccccccc}
1 & 0 & 0 & 0& 0& 1& 1 & 1 \\
0 & 1 & 0 & 0& 1& 0& 1 & 1 \\
0 & 0 & 1 & 0& 1& 1& 0 & 1 \\
0 & 0 & 0 & 1& 1& 1& 1 & 0 \\
\end{array}
\right]
=\left[\begin{array}{c}
{\bf r}_1 \\
{\bf r}_2 \\
{\bf r}_3 \\
{\bf r}_4 \\
\end{array}
\right]
\end{equation}

We introduce two ways to make $\overline{\mathcal H}_3$ as a fuzzy self-dual code.

\begin{example}
{\em

We define a fuzzy self-dual code $B$ associated with $\overline{\mathcal H}_3$. Let

\[
B: V_8 \rightarrow [0,1],
\]

such that
\[B(\x) = \left\{
\begin{array}{ll}
		1  & \mbox{ if } x=0 \\
	   1/2 & \mbox{ if } x \in \overline{\mathcal H}_3 -\{ 0 \} \\
       0 & \mbox{ if } x \in V_8 - \overline{\mathcal H}_3 \\
	\end{array}
\right.
\]

One can check that
$B_{1}= \{ 0 \}$, $B_{(1-1)}^{\perp}=V_8^{\perp}=\{0\}$,
and
$B_{1/2} = \overline{\mathcal H}_3=B_{(1-1/2)}^{\perp}$.
Hence, $B$ is a fuzzy self-dual code.
}
\end{example}

\begin{example}
\label{eg-code-D}
{\em

 Now we can also associate another fuzzy self-dual code $D$ with $\overline{\mathcal H}_3$ having $G(\overline{\mathcal H}_3)$ as a generator matrix. Let $S_0=\{0 \}=V_8^{\perp}, S_1 =\left <{\bf r}_1 \right>, S_2=\left <{\bf r}_1, {\bf r}_2 \right>, S_3=\left <{\bf r}_1, {\bf r}_2, {\bf r}_3 \right>, S_4=\overline{\mathcal H}_3$, where ${\bf r}_i$s ($1 \le i \le 3$) are as in Equation (\ref{eq-Ham-8-4-4}).

 Let

\[
D: V_8 \rightarrow [0,1],
\]

such that
\[D(\x) = \left\{
\begin{array}{ll}
		1  & \mbox{ if } \x \in S_0 \\
	   7/8 & \mbox{ if } \x \in S_1 - S_0 \\
	   6/8 & \mbox{ if } \x \in S_2 - S_1  \\
	   5/8 & \mbox{ if } \x \in S_3 - S_2  \\
	   1/2 & \mbox{ if } \x \in S_4 - S_3  \\
	   3/8 & \mbox{ if } \x \in S_3^{\perp} - S_4^{\perp}  \\
	   2/8 & \mbox{ if } \x \in S_2^{\perp} - S_3^{\perp}  \\
	   1/8 & \mbox{ if } \x \in S_1^{\perp} - S_2^{\perp}  \\
	   0 & \mbox{ if } \x \in V_8 - S_1^{\perp}.  \\
	\end{array}
\right.
\]

One can check that
$D_{1 - \frac{i}{8}}=S_i$ for $0 \le i \le 4$ and
$D_{\frac{i}{8}}=S_i^{\perp}$ for $1 \le i \le 4$. Thus
$D_{1 - \frac{i}{8}}^{\perp} = D_{\frac{i}{8}}$ for $0 \le i \le 8$.
Hence, $D$ is a fuzzy self-dual code.
Furthermore, we have partitioned $V_8$ into the nine disjoint subsets of the forms $S_0, S_{i} -S_{i-1}$, and $S_{i-1}^{\perp} -S_i^{\perp}$ for $1 \le i \le 4$.
 Note also that
$|S_{i} -S_{i-1}|=2^{i-1}$ and
$|S_{i-1}^{\perp} -S_i^{\perp}|=2^{i-1}$ for $1 \le j \le 4$.
}
\end{example}

In what follows, we give a most general form of a fuzzy self-dual code over $\mathbb F_q$.

\begin{theorem}\label{thm-fuzzy-sd}
Let $\mathcal C$ be a linear self-dual $[2n, n]$ code over $\mathbb F_q$ with a basis $\{ {\bf s}_1, \dots, {\bf s}_n \}$. Let $S_0={{\bf 0}}$ and $S_i =\left< {\bf s}_1, \dots, {\bf s}_i \right>$ for $1 \le i \le n$. Let $\alpha_0 \ge \alpha_1 \ge \alpha_2 \ge \cdots \ge \alpha_{n-1} > \alpha_n=1/2$ be numbers in the interval $[0,1]$, where $\alpha_i$'s can be repeated for generality. Let $A: V_{2n} \rightarrow [0,1]$ be its associated fuzzy set defined by
\[
A(\x) = \left\{
\begin{array}{ll}
		\alpha_0  & \mbox{ if } \x \in S_0 \\
	   \alpha_{i} & \mbox{ if } \x \in S_{i} - S_{i-1} \mbox{ for } 1 \le i \le n-1 \\

	   \alpha_n=1/2  & \mbox{ if } \x \in  S_{n} - S_{n-1}=\mathcal C -S_{n-1}\\
	   1-\alpha_{i} & \mbox{ if } \x \in S_{i-1}^{\perp} - S_{i}^{\perp}  \mbox{ for } 1 \le i \le n-1  \\
	   1-\alpha_0 & \mbox{ if } \x \in V_{2n} - S_0^{\perp}.  \\
	\end{array}
\right.
\]
Then $A$ is a fuzzy self-dual code over $\mathbb F_q$.
\end{theorem}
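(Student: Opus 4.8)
The plan is to verify directly the three requirements of Definition~\ref{def-sd-so}(i): that $A$ is a fuzzy linear code with $|Im(A)|>1$, that $A_{\alpha}^{\perp}=A_{1-\alpha}$ for every $\alpha\in Im(A)$, and that $A_{\beta}^{\perp}=A_{\beta}$ for some $\beta$. The backbone of the argument is that the subspaces used to define $A$ assemble into a single complete flag. Since $\{{\bf s}_1,\dots,{\bf s}_n\}$ is a basis of $\mathcal C$ we have $\dim S_i=i$, whence $S_0\subsetneq S_1\subsetneq\cdots\subsetneq S_n=\mathcal C$; dualizing reverses inclusions and turns dimension $i$ into $2n-i$, so $S_n^{\perp}\subsetneq S_{n-1}^{\perp}\subsetneq\cdots\subsetneq S_0^{\perp}=V_{2n}$. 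The self-duality $\mathcal C=\mathcal C^{\perp}$ gives $S_n=S_n^{\perp}$, which splices the two chains at the middle into one flag
\[
\{{\bf 0}\}=S_0\subsetneq S_1\subsetneq\cdots\subsetneq S_n=S_n^{\perp}\subsetneq\cdots\subsetneq S_1^{\perp}\subsetneq S_0^{\perp}=V_{2n},
\]
whose consecutive set-differences are exactly the shells on which $A$ is constant.

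First I would compute the upper level cuts. For $0\le i\le n$ the membership value is at least $\alpha_i$ precisely on the disjoint shells $S_0,\,S_1-S_0,\dots,S_i-S_{i-1}$, whose union telescopes to $S_i$; hence $A_{\alpha_i}=S_i$, and in particular $A_{1/2}=A_{\alpha_n}=\mathcal C$. The lower cuts are symmetric: the shell $S_i^{\perp}-S_{i+1}^{\perp}$ carries the value $1-\alpha_i$, and these values increase as $i$ decreases, so for each $i$ the set of points with value $\ge 1-\alpha_i$ is the union of $\mathcal C=S_n=S_n^{\perp}$ with the shells $S_j^{\perp}-S_{j+1}^{\perp}$ for $j\ge i$; this telescopes (using $S_n=S_n^{\perp}$ to close the gap at the middle) to $A_{1-\alpha_i}=S_i^{\perp}$. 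Since for any $\alpha\in[0,1]$ the cut $A_{\alpha}$ agrees with $A_v$ for the least $v\in Im(A)$ with $v\ge\alpha$, every nonempty cut is one of the $S_i$ or $S_i^{\perp}$, hence a linear subspace, so by Lemma~\ref{lem-alpha-sub} $A$ is a fuzzy linear code.

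With the cuts identified the three properties follow quickly. As $n\ge1$ forces $\alpha_{n-1}>1/2$, the image contains the two distinct values $\alpha_{n-1}$ and $1/2$, so $|Im(A)|>1$. Each element of $Im(A)$ has the form $\alpha_i$ or $1-\alpha_i$, and from $A_{\alpha_i}=S_i$, $A_{1-\alpha_i}=S_i^{\perp}$ together with $(S_i^{\perp})^{\perp}=S_i$ we obtain $A_{\alpha_i}^{\perp}=S_i^{\perp}=A_{1-\alpha_i}$ and $A_{1-\alpha_i}^{\perp}=S_i=A_{\alpha_i}$, which is precisely the orthogonality pairing required. Taking $\beta=1/2$ gives $A_{1/2}=\mathcal C$ with $A_{1/2}^{\perp}=\mathcal C^{\perp}=\mathcal C=A_{1/2}$, so $A$ is a fuzzy self-dual code.

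The delicate point is the level-cut computation for the lower half: one must confirm that the union of shells collapses to a single subspace $S_i^{\perp}$ with no missing layer, and this collapse hinges entirely on the middle identity $S_n=S_n^{\perp}$ supplied by self-duality. The other place demanding care is the clause allowing the $\alpha_i$ to be repeated: a repeated value merges adjacent shells in both halves, and one must track the indices through these coincidences so that each coarsened cut is still read off as the correct subspace of the flag and the pairing $\alpha\leftrightarrow 1-\alpha$ remains consistent. This index bookkeeping under repetitions is the fussiest part of the argument and the step I would check most carefully.
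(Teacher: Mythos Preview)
Your argument is correct and follows essentially the same route as the paper's: compute the level cuts as $A_{\alpha_i}=S_i$ and $A_{1-\alpha_i}=S_i^{\perp}$, then verify $|Im(A)|>1$, the pairing $A_\alpha^{\perp}=A_{1-\alpha}$ for each $\alpha\in Im(A)$, and self-duality at $\beta=1/2$. You are in fact more thorough than the paper, since you also check via Lemma~\ref{lem-alpha-sub} that every nonempty cut is a subspace (so $A$ is a fuzzy linear code to begin with) and make explicit how the identity $S_n=S_n^{\perp}$ closes the telescoping of the lower shells---both points the paper's brief proof leaves implicit.
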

\begin{proof}
Since $Im(A) \supset \{\alpha_0, 1/2, 1-\alpha_0 \}$, we have $|Im(A)| > 1$. It suffices to show $A_{\alpha} = (A_{1-\alpha})^{\perp}$ (equivalently, $A_{1-\alpha}=A_{\alpha}^{\perp}$) for any $\alpha \in Im(A)$. By symmetry, it is enough to consider $\alpha_i$'s for $1 \le i \le n$.
By the definition of $A(\x)$, we have $A_{1-\alpha_i}=S_{i-1}^{\perp}=A_{\alpha_i}^{\perp}$ for $1 \le i \le n$. Clearly $A_{1-\alpha_0}=V_{2n}={\bf 0}^{\perp}=A_{\alpha_0}^{\perp}$. We also have $\mathcal C = A_{1/2}=A_{1/2}^{\perp}$. Therefore, $A$ is a fuzzy self-dual code over $\mathbb F_q$.
\end{proof}

\begin{cor}
Let $\mathcal C$ be a self-dual $[2n, n]$ code over $\mathbb F_q$. Suppose that $A$ is a fuzzy self-dual code such that $A_{1/2}= \mathcal C$ and
\[
Im(A)=\{\alpha_{i_1}, \dots, \alpha_{i_k}, 1/2, \alpha_{1-i_k}, \dots, \alpha_{1-i_1} \}
\]
 for some $k>1$, where $\alpha_{i_1} > \alpha_{i_2} > \cdots > \alpha_{i_k} > 1/2 > \alpha_{1-i_k} > \dots > \alpha_{1-i_1}$. Then $A$ is obtained by function $A(\x)$ in Theorem~\ref{thm-fuzzy-sd}.
\end{cor}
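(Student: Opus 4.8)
The plan is to reverse-engineer the input data of Theorem~\ref{thm-fuzzy-sd} from $A$ and then show the two fuzzy sets coincide. The crucial observation is that a fuzzy \emph{self-dual} code is completely determined by its upper level cuts at thresholds $\ge 1/2$: indeed, $A(\x) \ge 1/2$ exactly when $\x \in A_{1/2} = \mathcal C$, and on $\mathcal C$ the value of $A$ is read off from the chain of cuts together with the membership values $\alpha_{i_1} > \cdots > \alpha_{i_k} > 1/2$; for $\x \notin \mathcal C$ the value $A(\x) < 1/2$ is forced by the self-dual identity $A_{1-\alpha} = (A_\alpha)^\perp$ of Definition~\ref{def-sd-so}(i), since $A(\x) \ge 1 - \alpha$ iff $\x \in (A_\alpha)^\perp$. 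Thus it suffices to produce an instance of Theorem~\ref{thm-fuzzy-sd} whose cuts at thresholds $\ge 1/2$ agree with those of $A$.

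First I would extract the chain. By the self-dual hypothesis and the given ordering of $Im(A)$, the cuts associated with the values above $1/2$ satisfy $A_{\alpha_{i_1}} \subset A_{\alpha_{i_2}} \subset \cdots \subset A_{\alpha_{i_k}} \subset A_{1/2} = \mathcal C$, and each inclusion is strict by Lemma~\ref{lem-one-to-one} (distinct image values give distinct cuts). Writing $d_j = \dim A_{\alpha_{i_j}}$, we obtain $0 \le d_1 < d_2 < \cdots < d_k < n$.

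Next I would build the flag and weights feeding Theorem~\ref{thm-fuzzy-sd}. Choose a basis $\{{\bf s}_1, \dots, {\bf s}_n\}$ of $\mathcal C$ adapted to the chain: pick a basis of $A_{\alpha_{i_1}}$, extend it to a basis of $A_{\alpha_{i_2}}$, and continue until a basis of $\mathcal C$ is obtained. Setting $S_i = \langle {\bf s}_1, \dots, {\bf s}_i\rangle$ and $S_0 = \{{\bf 0}\}$ gives a full flag with $S_{d_j} = A_{\alpha_{i_j}}$ and $S_n = \mathcal C$. Define the weights by $\alpha_i := \alpha_{i_j}$ for $d_{j-1} < i \le d_j$ (with $d_0 = 0$) and $\alpha_i := 1/2$ for $d_k < i \le n$; the repetitions exactly absorb the dimension jumps of the chain, and the sequence is weakly decreasing with $\alpha_n = 1/2$, as required. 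With these choices the fuzzy set $A'$ produced by Theorem~\ref{thm-fuzzy-sd} satisfies $A'_{\alpha_{i_j}} = S_{d_j} = A_{\alpha_{i_j}}$ for each $j$ and $A'_{1/2} = \mathcal C = A_{1/2}$, so $A'$ and $A$ share all cuts at thresholds $\ge 1/2$; by the determinacy noted above, $A' = A$.

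The step needing the most care is matching the block of weights equal to $1/2$ with the region $\mathcal C - A_{\alpha_{i_k}}$ on which $A \equiv 1/2$. In Theorem~\ref{thm-fuzzy-sd} this region is $S_n - S_{n-1}$ under the strict condition $\alpha_{n-1} > 1/2$, which corresponds precisely to $\dim A_{\alpha_{i_k}} = n-1$; when $d_k < n-1$ one uses the repetition freedom $\alpha_{d_k+1} = \cdots = \alpha_n = 1/2$, so it is the weakly decreasing (rather than strict) reading of the hypothesis that applies. Verifying that this bookkeeping reproduces $A$ stratum by stratum---and in particular that the duality pairing of the strata above and below $1/2$ is consistent---is the main obstacle, but it reduces to the determinacy principle of the first paragraph once the cuts at thresholds $\ge 1/2$ have been matched.
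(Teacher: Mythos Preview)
Your proposal is correct and follows essentially the same route as the paper: use the self-dual identity $A_{1-\alpha}=A_{\alpha}^{\perp}$ to obtain the strict chain $A_{\alpha_{i_1}}\subsetneq\cdots\subsetneq A_{\alpha_{i_k}}\subsetneq\mathcal C$, choose a basis of $\mathcal C$ adapted to this chain, and feed the resulting flag and weights back into Theorem~\ref{thm-fuzzy-sd}. Your write-up is in fact more explicit than the paper's own argument, in particular your bookkeeping with the repeated weights $\alpha_{d_{j-1}+1}=\cdots=\alpha_{d_j}:=\alpha_{i_j}$ to absorb dimension jumps, and your observation that matching the block $\mathcal C\setminus A_{\alpha_{i_k}}$ requires reading the hypothesis $\alpha_{n-1}>\alpha_n=1/2$ weakly when $d_k<n-1$; the paper simply says ``extend the basis and apply the function'' without addressing either point.
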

\begin{proof}
Since $A$ is a fuzzy self-dual code, we have $A_{1-\alpha_{i_j}}=A_{\alpha_{i_j}}^{\perp}$ for $1 \le j \le k$. Hence by the order of $\alpha_i$'s, we obtain
$A_{\alpha_{i_1}} \subsetneq A_{\alpha_{i_2}} \subsetneq \cdots \subsetneq A_{\alpha_{i_k}} \subsetneq \mathcal C \subsetneq
A_{\alpha_{i_k}}^{\perp} \subsetneq \cdots \subsetneq A_{\alpha_{i_2}}^{\perp} \subsetneq A_{\alpha_{i_1}}^{\perp}$.
Therefore all the linear codes arising from this sequence are determined by the subcodes of $\mathcal C$ and their duals.
Choose a basis $S_{i_1}$ for $A_{\alpha_{i_1}}$ and extend it to a basis $S_{i_2}$ for $A_{\alpha_{i_2}}$ etc to get a basis for $C$. Then apply the function $f_A(x)$ in Theorem~\ref{thm-fuzzy-sd}.
\end{proof}

\begin{definition}\label{def-special-fuzzy-sd}
{\em
A {\em dimension-parameterized fuzzy self-dual code} is a fuzzy self-dual $[2n, n]$ code $A$ over $\mathbb F_q$ with $\alpha_i=1- \frac{{\mbox{dim}}S_i}{2n}$ for $0 \le i \le 2n$ defined by Theorem~\ref{thm-fuzzy-sd}, where $S_i$ is an $i$-dimensional subcode of the self-dual code $A_{1/2}$.
 }
\end{definition}

\begin{example}

{\em
Let $\mathcal G_{24}$ be the famous binary extended Golay $[24,12,8]$ code~\cite[Section 1.9]{HuffmanPless}. It has generator matrix $G_{24}=[I_{12} ~|~ R_{12}]$ in standard form, where
\[
R_{12}= \left[\begin{array}{c|ccccccccccc}
0 & 1 &1 &1 &1 &1 &1 &1 &1 &1 &1 &1 \\
\hline
1 &1 &1 &0 &1 &1 &1 &0 &0 &0 &1 &0 \\
1 &1 &0 &1 &1 &1 &0 &0 &0 &1 &0 &1 \\
1 &0 &1 &1 &1 &0 &0 &0 &1 &0 &1 &1 \\
1 &1 &1 &1 &0 &0 &0 &1 &0 &1 &1 &0 \\
1 &1 &1 &0 &0 &0 &1 &0 &1 &1 &0 &1 \\
1 &1 &0 &0 &0 &1 &0 &1 &1 &0 &1 &1 \\
1 &0 &0 &0 &1 &0 &1 &1 &0 &1 &1 &1 \\
1 &0 &0 &1 &0 &1 &1 &0 &1 &1 &1 &0 \\
1 &0 &1 &0 &1 &1 &0 &1 &1 &1 &0 &0\\
1 &1 &0 &1 &1 &0 &1 &1 &1 &0 &0 &0\\
1 &0 &1 &1 &0 &1 &1 &1 &0 &0 &0 &1\\
\end{array}
\right].
\]

 Then by Definition~\ref{def-special-fuzzy-sd}, let
$\alpha_i=1- \frac{i}{24}$ for $0 \le i \le 24$.
For $1 \le i \le 12$, let $S_i$ be the subcode generated by the first $i$ rows of $G_{24}$ and set $A_{\alpha_i}=S_i$.
Then we obtain a dimension-parameterized fuzzy self-dual Golay code.
}

\end{example}

\section{Fuzzy simplex and Hamming codes}

In this section, we introduce interesting fuzzy self-orthogonal codes. We recall that the $[7,3,4]$ simplex code $\mathcal S_3$ is a well known binary self-orthogonal code whose generator matrix is below.
\[
G(\mathcal S_3)=\left[\begin{array}{ccccccc}
0 & 0 & 0 & 1& 1& 1& 1 \\
0 & 1 & 1 & 0& 0& 1& 1 \\
1 & 0 & 1 & 0& 1& 0& 1 \\
\end{array}
\right]
=\left[\begin{array}{c}
{\bf s}_1 \\
{\bf s}_2 \\
{\bf s}_3 \\
\end{array}
\right]
\]

 We can make it as an interesting fuzzy self-orthogonal code as follows.

\begin{example}\label{eg-simplex}
{\em
(Fuzzy simplex $[7,3,4]$ code and Hamming $[7,4,3]$ code)

 We construct a fuzzy self-orthogonal code $E$ with the simplex code $\mathcal S_3$ having generator matrix $G(\mathcal S_3)$. Note that $\mathcal S_3^{\perp}$ is the Hamming $[7,4,3]$ code $\mathcal H_3$.
 Let $S_0=\{0 \}=V_7^{\perp},~ S_1 =\left <{\bf s}_1 \right>,~ S_2=\left <{\bf s}_1,~ {\bf s}_2 \right>,~ S_3=\left <{\bf s}_1, {\bf s}_2, {\bf s}_3 \right>=\mathcal S_3$ such that $S_3 \subsetneq S_3^{\perp}=\mathcal H_3$. Hence we have $S_1 \subsetneq S_2 \subsetneq S_3 \subsetneq S_3^{\perp} \subsetneq S_2^{\perp} \subsetneq S_1^{\perp} \subsetneq S_0^{\perp}=V_7$.

 Let

\[
E: V_7 \rightarrow [0,1],
\]

such that
\[E(\x) = \left\{
\begin{array}{ll}
		1  & \mbox{ if } \x \in S_0 \\
	   6/7 & \mbox{ if } \x \in S_1 - S_0 \\
	   5/7 & \mbox{ if } \x \in S_2 - S_1  \\
	   4/7 & \mbox{ if } \x \in S_3 - S_2  \\
	   3/7 & \mbox{ if }  \x \in S_3^{\perp} - S_3 =\mathcal H_3 - S_3  \\
	   2/7 & \mbox{ if }  \x \in S_2^{\perp} - S_3^{\perp}  \\
	   1/7 & \mbox{ if } \x \in S_1^{\perp} - S_2^{\perp}  \\
	   0 & \mbox{ if } \x \in V_7 - S_1^{\perp}.  \\
	\end{array}
\right.
\]

One can check that
$E_{1 - \frac{i}{7}}=S_i$ for $0 \le i \le 3$ and
$E_{\frac{i}{7}}=S_i^{\perp}$ for $0 \le i \le 3$. Therefore, we have $E_{1 - \frac{i}{7}}^{\perp} = E_{\frac{i}{7}}$ for $0 \le i \le 3$. By Definition~\ref{def-sd-so}, this implies that $E$ is a fuzzy self-orthogonal code.

Furthermore, we have partitioned $V_7$ into the eight disjoint subsets of the forms $S_0, S_{i} -S_{i-1}$ for $1 \le i \le 3$, $\mathcal H_3 - S_3$, $S_{i}^{\perp} -S_{i+1}^{\perp}$ for $0 \le i \le 2$ since $S_0^{\perp} =V_7$.

Note that although $\mathcal H_3$ is not self-orthogonal, its associated fuzzy linear code is  fuzzy self-orthogonal because the concept of a fuzzy self-orthogonal code involves a sequence of pairs of a linear code and its dual.
}
\end{example}

\begin{example}{\em
(Embedding a self-orthogonal code)

In general, let $\mathcal C$ be any $[n, k]$ self-orthogonal code over $\mathbb F_q$. Then we have $S_0=\{{\bf 0} \} \subset \mathcal C \subset \mathcal C^{\perp} \subset V_n$.
Define
\[
F: V_n \rightarrow [0,1],
\]

such that
\[F(\x) = \left\{
\begin{array}{ll}
		1  & \mbox{ if } \x \in S_0 \\
	   1- k/n & \mbox{ if } \x \in \mathcal C - S_0 \\
	    k/n & \mbox{ if } \x \in \mathcal C^{\perp} - \mathcal C  \\
	   0 & \mbox{ if } \x \in V_n - \mathcal C^{\perp}.  \\
	\end{array}
\right.
\]

It is easy to check that $F_{1-k/n}^{\perp} = F_{k/n}$ and $F_1^{\perp} = F_0$. Therefore, $F$ is a fuzzy self-orthogonal code.
}
\end{example}

\section{Fuzzy Reed-Muller codes}\label{sec-RM}
In this section, we show that the family of binary nested Reed-Muller codes $\mathcal R(r,m)$ can be parameterized as upper $\alpha$-level cuts, which produce an interesting fuzzy self-orthogonal code.

The binary Reed-Muller codes  and their majority logic decoding algorithm were constructed in 1954 by Reed and Muller. Since then, Reed-Muller codes became important practically. They also have close connections with Boolean functions~\cite{MS} and finite affine and projective geometries~\cite{AssKey}. Therefore, it is interesting to study fuzzy Reed-Miller codes in terms of a fuzzy linear code.

We begin with a definition of binary Reed-Muller codes~\cite[p. 34]{HuffmanPless} as follows.

 \begin{definition}{\em
 Let $m$ be a positive integer and $r$ a nonnegative integer with $r \le m$.
 The {\em binary $r$th order Reed-Muller, or RM, code of length $2^m$,} denoted by $\mathcal R(r, m)$, has a recursive generator matrix $G(r,m)$:
   \[
G(r,m)=\left[ \begin{array}{cc}
G(r, m-1) & G(r, m-1) \\
O  & G(r-1, m-1) \\
   \end{array}
   \right],
   \]
   where $G(0,m)=[1 1 \cdots 1]$ and $G(m,m)=I_{2^m}$.
   The code $\mathcal R(0, m)$ is the binary repetition code
of length $2^m$ with generator matrix $G(0,m)$ and $\mathcal R(m, m)$ is the entire space $F_2^m$ with generator matrix $G(m,m)$.
}
\end{definition}

\begin{theorem}{\em(\cite[Theorem 1.10.1]{HuffmanPless})\label{thm-RM code}}
Let $r$ be an integer with $0 \le r \le m$. Then the following hold:
\begin{enumerate}

\item[{\rm {(i)}}]
 $\mathcal R(i, m) \subseteq \mathcal R(j, m)$, if $0 \le i \le j \le m$.

\item[{\rm {(ii)}}]
The dimension of $\mathcal R(r, m)$ equals
\[
\binom{m}{0}+ \binom{m}{1}+ \cdots + \binom{m}{r}.
\]

\item[{\rm{(iii)}}]
 The minimum weight of $\mathcal R(r, m)$ equals $2^{m-r}$.

\item[{\rm{(iv)}}]
$\mathcal R(m, m)^{\perp} = \{{\bf 0} \}$, and if $0 \le r < m$, then $\mathcal R(r, m)^{\perp}
 = \mathcal R(m-r-1, m)$.
\end{enumerate}
\end{theorem}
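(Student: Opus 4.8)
The plan is to induct on $m$, exploiting the fact that the recursive generator matrix $G(r,m)$ encodes the Plotkin $(u \mid u+v)$ construction: reading off the two block rows gives
\[
\mathcal R(r,m) = \{\, (u,\; u+v) : u \in \mathcal R(r,m-1),\ v \in \mathcal R(r-1,m-1) \,\},
\]
with small $m$ and the degenerate orders $r \in \{0,m\}$ (the repetition code $\mathcal R(0,m)$ and the full space $\mathcal R(m,m) = \F_2^{2^m}$) serving as base cases. Every part is then deduced from this decomposition together with the induction hypothesis applied to the two constituent codes of length $2^{m-1}$.

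For (i) I would first establish the single step $\mathcal R(r-1,m) \subseteq \mathcal R(r,m)$ and then chain it. From the block form, the generators of $\mathcal R(r-1,m)$ have the shape $(u,u)$ with $u \in \mathcal R(r-1,m-1)$ and $(0,v)$ with $v \in \mathcal R(r-2,m-1)$; by the induction hypothesis $\mathcal R(r-1,m-1)\subseteq\mathcal R(r,m-1)$ and $\mathcal R(r-2,m-1)\subseteq\mathcal R(r-1,m-1)$, so each such generator lies in $\mathcal R(r,m)$. For (ii), the map $(u,v)\mapsto(u,u+v)$ is injective, whence $\dim\mathcal R(r,m)=\dim\mathcal R(r,m-1)+\dim\mathcal R(r-1,m-1)$; feeding in the induction hypothesis and Pascal's identity $\binom{m}{k}=\binom{m-1}{k}+\binom{m-1}{k-1}$ term by term yields the claimed sum of binomial coefficients, with $\binom{m}{0}=1$ and $\sum_{k=0}^m\binom{m}{k}=2^m$ covering the extreme orders.

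Part (iii) is where the real work sits. I would prove the general $(u\mid u+v)$ weight lemma: if $C_1,C_2$ have minimum distances $d_1,d_2$, then the code $\{(u,u+v)\}$ has minimum distance $\min(2d_1,d_2)$. The lower bound splits into the case $v=0$, where the weight is $2\,\wt(u)\ge 2d_1$, and the case $v\ne 0$, where $\wt(u)+\wt(u+v)\ge \wt(v)\ge d_2$ over $\F_2$; both bounds are attained, giving equality. Applying this with $d_1 = 2^{(m-1)-r}$ and $d_2 = 2^{(m-1)-(r-1)}=2^{m-r}$ gives $\min(2\cdot 2^{m-1-r},\,2^{m-r})=2^{m-r}$, as required, while $r=0$ and $r=m$ are checked directly. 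This weight lemma, and keeping its boundary cases consistent with the recursion, is the step I expect to be the main obstacle.

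Finally, for (iv) the identity $\mathcal R(m,m)^\perp=\{{\bf 0}\}$ is immediate since $\mathcal R(m,m)$ is the whole space. For $0\le r<m$ I would prove the orthogonality $\mathcal R(r,m)\perp\mathcal R(s,m)$ whenever $r+s\le m-1$, again by induction: expanding $\langle (u_1,u_1+v_1),(u_2,u_2+v_2)\rangle$ over $\F_2$ collapses to $\langle u_1,v_2\rangle+\langle v_1,u_2\rangle+\langle v_1,v_2\rangle$, and each of these three pairings vanishes by the induction hypothesis precisely because the relevant order sums are at most $m-2$. Taking $s=m-r-1$ shows $\mathcal R(m-r-1,m)\subseteq\mathcal R(r,m)^\perp$. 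To upgrade containment to equality I would compare dimensions: by (ii) and $\binom{m}{k}=\binom{m}{m-k}$, $\dim\mathcal R(r,m)+\dim\mathcal R(m-r-1,m)=\sum_{k=0}^m\binom{m}{k}=2^m$, which matches $\dim\mathcal R(r,m)+\dim\mathcal R(r,m)^\perp$, forcing $\mathcal R(m-r-1,m)=\mathcal R(r,m)^\perp$.
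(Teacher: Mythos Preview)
The paper does not give its own proof of this theorem; it simply quotes the statement from \cite[Theorem 1.10.1]{HuffmanPless} and uses it as a black box in the proof of Theorem~\ref{thm-RM}. Your proposal is correct and is essentially the standard inductive argument via the $(u\mid u+v)$ construction that one finds in that reference, so there is nothing to contrast it with here.
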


\begin{theorem}\label{thm-RM}
Let $\mathcal R(r,m)$ be a binary $r$th order Reed-Muller code of length $2^m$.
\begin{enumerate}
\item[{\rm{(i)}}]
If $m$ is an odd integer $\ge 3$, then there is a fuzzy self-dual code $A$ over $\mathbb F_2$ such that $A_1 =\{ {\bf 0} \}$ and $A_{\alpha_r} = \mathcal R(r,m)$ for $0 \le r \le \frac{m-3}{2}$.

\item[{\rm{(ii)}}] If $m$ is an even integer $\ge 2$, then there is a fuzzy self-orthogonal code $B$ over $\mathbb F_2$ such that $B_1 =\{ {\bf 0} \}$ and $B_{\alpha_r} = \mathcal  R(r,m)$ for $0 \le r \le \frac{m}{2} -1$.
\end{enumerate}

\end{theorem}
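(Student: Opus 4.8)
The plan is to use the two structural facts about Reed--Muller codes recorded in Theorem~\ref{thm-RM code}: the nesting $\mathcal{R}(0,m)\subseteq \mathcal{R}(1,m)\subseteq\cdots\subseteq\mathcal{R}(m,m)$ from (i), and the duality $\mathcal{R}(r,m)^{\perp}=\mathcal{R}(m-r-1,m)$ from (iv). The exponent $m-r-1$ equals $r$ exactly when $2r=m-1$ and exceeds $r$ exactly when $2r<m-1$; this dichotomy is what forces the split into the odd and even cases. In both cases I would produce the fuzzy code from a nested flag adapted to the Reed--Muller filtration and assign membership values by codimension.

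For part (i), with $m$ odd put $c=\frac{m-1}{2}$, so that $\mathcal{R}(c,m)^{\perp}=\mathcal{R}(m-c-1,m)=\mathcal{R}(c,m)$; thus $\mathcal{C}:=\mathcal{R}(c,m)$ is a self-dual $[2^m,2^{m-1}]$ code. First I would build a basis of $\mathcal{C}$ adapted to the Reed--Muller flag: choose a basis of $\mathcal{R}(0,m)$ and extend it successively through $\mathcal{R}(1,m),\mathcal{R}(2,m),\dots,\mathcal{R}(c,m)$, which is possible because the codes are nested. Numbering the resulting basis vectors ${\bf s}_1,\dots,{\bf s}_{2^{m-1}}$ and setting $S_i=\langle {\bf s}_1,\dots,{\bf s}_i\rangle$ makes each $\mathcal{R}(r,m)$ equal to $S_{d_r}$ with $d_r=\dim\mathcal{R}(r,m)=\sum_{j=0}^{r}\binom{m}{j}$. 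Feeding this flag into the dimension-parameterized construction of Theorem~\ref{thm-fuzzy-sd} (Definition~\ref{def-special-fuzzy-sd}), with $\alpha_i=1-\frac{i}{2^m}$, yields a fuzzy self-dual code $A$; since the $\alpha_i$ strictly decrease, the level cut at $\alpha_r:=1-\frac{d_r}{2^m}$ is exactly $S_{d_r}=\mathcal{R}(r,m)$, and $A_1=S_0=\{{\bf 0}\}$. This covers $0\le r\le c-1=\frac{m-3}{2}$, with the remaining value $r=c$ giving the self-dual middle $A_{1/2}=\mathcal{C}$, so (i) follows directly from Theorem~\ref{thm-fuzzy-sd}.

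For part (ii), with $m$ even put $c=\frac{m}{2}-1$, so that $\mathcal{R}(c,m)^{\perp}=\mathcal{R}(\frac{m}{2},m)$ and $\mathcal{R}(c,m)\subsetneq\mathcal{R}(\frac{m}{2},m)$; hence $\mathcal{C}:=\mathcal{R}(\frac{m}{2}-1,m)$ is self-orthogonal but not self-dual, with $\mathcal{C}^{\perp}=\mathcal{R}(\frac{m}{2},m)$ and $k:=\dim\mathcal{C}<2^{m-1}$. Here no ready-made theorem applies, so I would mimic the fuzzy simplex/Hamming construction of Example~\ref{eg-simplex}. Taking an adapted flag $S_0=\{{\bf 0}\}\subset S_1\subset\cdots\subset S_k=\mathcal{C}$ as above, I would define $B$ by $B(\x)=1-\frac{i}{2^m}$ on $S_i-S_{i-1}$ for $0\le i\le k$, by $B(\x)=\frac{k}{2^m}$ on the gap $\mathcal{C}^{\perp}-\mathcal{C}=S_k^{\perp}-S_k$, and by $B(\x)=\frac{i-1}{2^m}$ on $S_{i-1}^{\perp}-S_i^{\perp}$ for $1\le i\le k$. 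One then checks $B_{1-i/2^m}=S_i$ and $B_{i/2^m}=S_i^{\perp}$ for $0\le i\le k$, so that $(B_{1-\alpha})^{\perp}=B_{\alpha}$ for every $\alpha\in Im(B)$, giving a fuzzy self-orthogonal code by Definition~\ref{def-sd-so}(ii); again $B_1=\{{\bf 0}\}$ and $B_{\alpha_r}=\mathcal{R}(r,m)$ for $0\le r\le\frac{m}{2}-1$.

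The routine linear algebra of extending bases through the flag is harmless, and the condition $|Im(\cdot)|>1$ is immediate. The real work is confined to part (ii): one must verify that the values assigned to the three strata interleave correctly so that the level cuts are exactly the nested codes and their duals. The crux is the middle gap $\mathcal{C}^{\perp}-\mathcal{C}$, whose membership value $\frac{k}{2^m}$ must fall strictly below $\frac12$ while all of $\mathcal{C}$ sits strictly above it; this is guaranteed precisely by $\dim\mathcal{C}^{\perp}-\dim\mathcal{C}=\binom{m}{m/2}>0$, i.e. $k<2^{m-1}$, which is where the even-$m$ hypothesis is genuinely used. I expect this interleaving check to be the only step that needs care.
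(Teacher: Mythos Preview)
Your argument is correct and rests on the same two ingredients the paper uses, namely the nesting $\mathcal R(0,m)\subset\cdots\subset\mathcal R(m,m)$ and the duality $\mathcal R(r,m)^{\perp}=\mathcal R(m-r-1,m)$ from Theorem~\ref{thm-RM code}. The only real difference is one of granularity: the paper takes the Reed--Muller codes themselves as the entire nested chain (so $Im(A)$ has roughly $m+2$ values and the fuzzy self-dual/self-orthogonal condition is verified directly from the RM duality), whereas you interpolate a full dimension-by-dimension flag $S_0\subset S_1\subset\cdots$ through the RM codes and then, for~(i), invoke Theorem~\ref{thm-fuzzy-sd} off the shelf, and for~(ii) reproduce the pattern of Example~\ref{eg-simplex}. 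Your route buys you a clean citation of an existing result in the odd case and a uniform ``dimension-parameterized'' membership function in both cases; the paper's route is shorter and keeps $Im(A)$ minimal, with exactly one membership value per Reed--Muller code. Your interleaving check in~(ii) (that $1-k/2^m>k/2^m$ because $k=\dim\mathcal R(\tfrac{m}{2}-1,m)<2^{m-1}$) is the precise point where the even hypothesis enters, and it is handled correctly.
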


\begin{proof}
By Theorem~\ref{thm-RM code}, $\mathcal  R(r,m)$ satisfies that $\mathcal  R(i, m) \subsetneq \mathcal  R(j,m)$ for $0 \le i < j \le m$, $\mathcal  R(m,m)^{\perp} =\{{\bf 0} \}$, $\mathcal R(m,m)=\mathbb F_2^n$, and
$\mathcal R(r,m)^{\perp}=\mathcal R(m-r-1, m)$ for $0 \le r < m$.

Suppose that $m$ is odd and that $0 \le r \le m$. Then $\mathcal R(r=\frac{m-1}{2}, m)$ is a self-dual code. Define $A(x)=1/2$ if $x \in \mathcal R(\frac{m-1}{2}, m)- \mathcal R(\frac{m-1}{2}-1, m)$ so that $A_{1/2}=\mathcal R(\frac{m-1}{2}, m)$.


Suppose that $1> \alpha_0 > \alpha_1 > \cdots > \alpha_{\frac{m-3}{2}} > \alpha_{\frac{m-1}{2}}=1/2$.
If $0 \le r \le \frac{m-3}{2}$, define $A_{\alpha_r}=\mathcal R(r,m)$ and $A_{1-\alpha_r}=\mathcal R(m-r-1, m)$. Then for $0 \le r \le \frac{m-3}{2}$, we have $A_{\alpha_r}^{\perp}=\mathcal R(r,m)^{\perp}=\mathcal R(m-r-1, m)=A_{1-\alpha_r}$.
Define $A_1=\{ {\bf 0} \}$ and $A_0=\mathbb F_2^n=\mathcal R(m,m)$ so that $A_1^{\perp}= A_0$.
Therefore, $A$ is a fuzzy self-dual code over $\mathbb F_2$.

Now suppose that $m$ is even and that $0 \le r \le m$.
Suppose that $1> \alpha_0 > \alpha_1 > \cdots > \alpha_{\frac{m}{2} -1} > 1/2$.
If $0 \le r \le \frac{m}{2} -1$, define $B_{\alpha_r}=\mathcal R(r,m)$ and $B_{1-\alpha_r}=\mathcal R(m-r-1, m)$.
Then for $0 \le r \le \frac{m}{2} -1$, $B_{\alpha_r}^{\perp}=\mathcal R(r,m)^{\perp}=\mathcal R(m-r-1, m)=B_{1-\alpha_r}$.
Define $B_1=\{ {\bf 0} \}$ and $B_0=\mathbb F_2^n=\mathcal R(m,m)$ so that $B_1^{\perp}= B_0$.
Therefore, $B$ is a fuzzy self-orthogonal code over $\mathbb F_2$.
\end{proof}

\begin{example}{\em
Consider the binary 2nd order Reed-Muller $[32,16,8]$ code $\mathcal R(2,5)$, which is self-dual, that is, $\mathcal R(2,5)=\mathcal R(2,5)^{\perp}$.
$\mathcal R(r,5)$ has dimension $\sum_{i=0}^r {5 \choose i}$.
We construct a dimension-parameterized fuzzy self-dual code containing all the Reed-Muller codes $\mathcal R(r,5)$ as follows.

Define
$A: V_{32} \rightarrow [0,1]
$
such that
\[A(\x) = \left\{
\begin{array}{ll}
		1  & \mbox{ if } x \in S_0=\{{\bf 0} \} \\
	   1-1/32 & \mbox{ if } x \in \mathcal R(0,5) - S_0 \\
	   1-6/32 & \mbox{ if } x \in \mathcal R(1,5) - \mathcal R(0,5) \\
	   1/2 & \mbox{ if } x \in \mathcal R(2,5) - \mathcal R(1,5) \\
	   6/32 & \mbox{ if } x \in \mathcal R(3,5) - \mathcal R(2,5) =\mathcal R(1,5)^{\perp} - \mathcal R(2,5) \\
	   1/32 & \mbox{ if } x \in  \mathcal R(4,5) - \mathcal R(3,5) =\mathcal R(0,5)^{\perp} - \mathcal R(1,5)^{\perp} \\
	   0 & \mbox{ if } x \in  \mathcal R(5,5) - \mathcal R(4,5) =S_0^{\perp} - \mathcal R(0,5)^{\perp}. \\
	\end{array}
\right.
\]
By Theorem~\ref{thm-RM}, $A$ gives a fuzzy self-dual code.
}
\end{example}

\section{Decoding fuzzy linear codes}
In this section, we describe how to decode fuzzy linear codes.

A basic idea of decoding fuzzy linear codes was given by~\cite{ShuGan}. Their idea is to decode the received vector with incorrect membership as a correct codeword with correct membership as follows.

Suppose that ${\bf x}$ is a codeword of a fuzzy linear code $A$ and that ${\bf y}={\bf x} + {\bf e}$, where ${\bf e}$ is an error vector, is a received vector. The authors~\cite{ShuGan} assume that $A({\bf x}) \ge A({\bf y})$.

Step 1) Let $\beta =A({\bf y})$ and get the corresponding linear code $A_{\beta}$.

Step 2) Use a decoding algorithm for $A_{\beta}$ to decode ${\bf y}$.

\medskip

There is no more discussion with some explicit example. In fact, Step 2) is not correct because ${\bf y}$ is already in $A_{\beta}$ and so ${\bf y}$ has no error in $A_{\beta}$.

\medskip

Hence we need to describe the above steps more explicitly as follows.

\begin{enumerate}

\item[{(i)}]
Suppose that we send only ${\bf x}$ with $A({\bf x}) \ge \alpha_1$ for some $\alpha_1 \in (0, 1]$ although it is common to consider $\alpha_1=1/2$. Hence, $A_{\alpha_1}$ is a linear $[n, k_1 ,d_1]$ code over $\F_q$ such that ${\bf x} \in A_{\alpha_1}$. Let $\mathcal C = A_{\alpha_1}$.  Let $G_{\alpha_1}$ (resp. $H_{\alpha_1}$) be generator matrix (resp. parity check matrix) for $A_{\alpha_1}$.

\item[{(ii)}]
Suppose that ${\bf y}={\bf x} + {\bf e}$ is a received vector, where ${\bf e}$ is a possible error vector.
If $A({\bf y}) \ge \alpha_1$, then ${\bf y} \in \C$, that is, ${\bf y} ={\bf x}$. We are done.

If $A({\bf y}) < \alpha_1$, then we know that ${\bf y} \notin \C$, that is, ${\bf e} \ne {\bf 0}$. So, $A({\bf x}) > A({\bf y})$.

\item[{(iii)}]
  Let $\alpha_2=A({\bf y})$. Then $A_{\alpha_2}$ is an $[n, k_2, d_2]$ code over $\F_q$.
   Let $G_{\alpha_2}$ (resp. $H_{\alpha_2}$) be generator matrix (resp. parity check matrix) for $A_{\alpha_2}$.
  Since $k_2 > k_1$, we have $A_{\alpha_1} \subsetneq A_{\alpha_2}$. Note $G_{\alpha_2}= \left[\begin{array}{c}
  G_3 \\
  G_{\alpha_2}\\
  \end{array}
  \right]$, where $G_3$ is a $(k_2-k_1) \times n$ matrix.

\item[{(iv)}]
 Consider the quotient space $A_{\alpha_2}/A_{\alpha_1}$ over $\F_q$, which has dimension $k_2 -k_1$.

 Compute the syndrome of ${\bf y}$ with respect to $H_{\alpha_1}$, that is, ${\bf s}_1 =H_{\alpha_1}{\bf y}^T$.

We generate a complete list of syndrome ${\bf s}_3= H_{\alpha_1}{\bf y}_3^T$ of any linear combination ${\bf y}_3$ of the rows of $G_3$ in Step (iii).

Then by following the Syndrome Decoding
Algorithm in~\cite[p. 42]{HuffmanPless}, create a table pairing the syndrome with the coset leader ${\bf e}_1$, which is, by definition, a vector of the smallest weight in the coset ${\bf e}_1 + \mathcal C$ of $A_{\alpha_2}/A_{\alpha_1}$.


\item[{(v)}]
If ${\bf s}_1 = H_{\alpha_1}{\bf y}^T$ is paired with
some coset leader ${\bf e}_1$, that is, $H_{\alpha_1}{\bf y}^T= H_{\alpha_1}{\bf e}_1^T$,
 then ${\bf y} - {\bf e}_1 \in \C$. Therefore, {\bf y} is corrected as ${\bf x}={\bf y} - {\bf e}_1$ and
$A({\bf y})$ is corrected as $A({\bf x})$. This completes the decoding algorithm.
\end{enumerate}

We remark that the table in Step (iv) has $q^{k_2-k_1
}$ entries, which is much smaller than the usual syndrome table of size $q^{n-k_1}$. This is an advantage of decoding by a fuzzy linear code. The Syndrome decoding can be simplified if the linear code $\C$ has an efficient decoding algorithm.


\medskip

\begin{example}
{\em
We describe this decoding algorithm to the fuzzy Hamming  $[8,4,4]$ code $\overline{\mathcal H}_3$ in Example~\ref{eg-code-D}.
From Equation (\ref{eq-Ham-8-4-4}), it has parity check matrix

\[
G(\overline{\mathcal H}_3)=\left[\begin{array}{cccccccc}
1 & 0 & 0 & 0& 0& 1& 1 & 1 \\
0 & 1 & 0 & 0& 1& 0& 1 & 1 \\
0 & 0 & 1 & 0& 1& 1& 0 & 1 \\
0 & 0 & 0 & 1& 1& 1& 1 & 0 \\
\end{array}
\right]
=\left[\begin{array}{c}
{\bf r}_1 \\
{\bf r}_2 \\
{\bf r}_3 \\
{\bf r}_4 \\
\end{array}
\right].
\]

Let ${\bf x}=(0 0 1 0 1 1 0 1)={\bf r}_3$ with $A({\bf x})=5/8$ be an original codeword and assume a received vector ${\bf y}={\bf x} + {\bf e}=(0 0 1 1 1 1 0 1)$ with $A({\bf y})=3/8$. Since $\overline{\mathcal H}_3$ is self-dual, its parity check matrix $H_4$ is also $G(\overline{\mathcal H}_3)$.

 Then $H_4{\bf y}^T=(0 0 0 1)^T \ne (0 0 0 0)^T$. Since $A({\bf y})=3/8$, ${\bf y}$ is in $S_3^{\perp} - \overline{\mathcal H}_3$ using the notation of Example~\ref{eg-code-D}. We can easily check that the generator matrix for $S_3^{\perp}=A_{3/7}$ is

\[
 G(A_{3/7})=\left[\begin{array}{c}
0 0 0 1 0 0 0  0 \\
\hline
  H_4 \\
\end{array}
\right].
\]

 Hence, by letting ${\bf e}=(0 0 0 1 0 0 0)$, we compute its syndrome $H_4{\bf e}^T=(0 0 0 1)^T$. Therefore, by Step (v) of the above algorithm, we have $H_4{\bf y}^T= H_4{\bf e}^T$, that is, $H_4 ({\bf y} - {\bf e})^T=(0 0 0 0)^T$. Thus, we decode ${\bf y}$ as ${\bf x}={\bf y} -{\bf e}=(00101101)$, and decode $A({\bf y})=3/8$ as $A({\bf x})=5/8$, as wanted.
}
\end{example}

\section{Conclusion}
 We have connected fuzzy sets with linear codes and introduced the arithmetic of fuzzy linear codes.
 It turns out that a fuzzy linear code is a parameterized way to describe a linear code.
For the first time, we have defined the concept of fuzzy self-dual and self-orthogonal codes. In particular, we have defined fuzzy self-dual Hamming codes, fuzzy self-dual Golay codes, and fuzzy Reed-Muller codes.
Our approach based on fuzzy linear codes gives an insight by relating a given linear code to a parameterized sequence of its nested subcodes and nested supercodes. Furthermore, we have proposed the Syndrome decoding algorithm for fuzzy linear codes in general. This algorithm works more efficient when we consider fuzzy self-orthogonal codes.

\bigskip

\noindent
{\bf Declaration of competing interest}

There is no competing interest.

\bigskip

\noindent
{\bf Data availability}

No data was used for the research described in the article.

\end{document}